\newcommand*\cat[1]{\text{\textup{\textsf{#1}}}}
\newcommand*\HDA{\cat{HDA}}
\newcommand*\HDP{\cat{HDP}}
\newcommand*\HDT{\cat{HDT}}
\newcommand*\HDAt{\cat{HDO}}
\newcommand*\HDAh{\HDA_\textup{\textsf{h}}}
\newcommand*\HDPh{\HDP_\textup{\textsf{h}}}
\newcommand*\HDTh{\HDT_\textup{\textsf{h}}}
\newcommand*\LHDA{\cat{L}\HDA}
\newcommand*\pCub{\cat{pCub}}
\newcommand*\LHDAh{\LHDA_\textup{\textsf{h}}}
\newcommand{\Nat}{\mathbbm{N}}
\DeclareMathOperator{\id}{id}
\newcommand*\tto[1]{\xrightarrow{#1}}
\newcommand*\tfrom[1]{\xleftarrow{#1}}
\newcommand*\from{\leftarrow}
\newcommand*\bang{\mathord{!}}
\newcommand*\ie{\textit{i.e.}\xspace}
\newcommand*\eg{\textit{e.g.}\xspace}
\newcommand*\cf{\textit{cf.}\xspace}
\newcommand*\etal{\textit{et.al.}\xspace}
\begin{document}

\begin{frontmatter}
  \title%
  {History-Preserving Bisimilarity \\ for Higher-Dimensional Automata \\
    via Open Maps}
  
  \author{Uli Fahrenberg}
  \author{Axel Legay}
  \address{INRIA/IRISA, Campus de Beaulieu, 35042 Rennes CEDEX, France}
  
  \begin{abstract}
    We show that history-preserving bisimilarity for higher-dimensional
    automata has a simple characterization directly in terms of
    higher-dimensional transitions.  This implies that it is decidable
    for finite higher-dimensional automata.  To arrive at our
    characterization, we apply the open-maps framework of Joyal, Nielsen
    and Winskel in the category of unfoldings of precubical sets.
  \end{abstract}
  \begin{keyword}
    higher-dimensional automaton, history-preserving bisimilarity,
    homotopy, unfolding, concurrency
  \end{keyword}
\end{frontmatter}

\section{Introduction}

The dominant notion for behavioral equivalence of processes is
\emph{bisimulation} as introduced by Park~\cite{DBLP:conf/tcs/Park81}
and Milner~\cite{book/Milner89}.  It is compelling because it enjoys
good algebraic properties, admits several easy characterizations using
modal logics, fixed points, or game theory, and generally has low
computational complexity.

Bisimulation, or rather its underlying semantic model of
\emph{transition systems}, applies to a setting in which concurrency of
actions is the same as non-deterministic interleaving; using CCS
notation~\cite{book/Milner89}, $a| b= a. b+ b. a$.  For some
applications however, a distinction between these two is necessary,
which has led to development of so-called \emph{non-interleaving} or
\emph{truly concurrent} models such as Petri nets~\cite{book/Petri62},
event structures~\cite{DBLP:journals/tcs/NielsenPW81}, asynchronous
transition systems~\cite{Bednarczyk87-async,DBLP:journals/cj/Shields85}
and others; see~\cite{WinskelN95-Models} for a survey.

One of the most popular notions of equivalence for non-interleaving
systems is \emph{history-preserving bisimilarity} (or
\emph{hp-bisimilarity} for short).  It was introduced independently by
Degano, De Nicola and Montanari in~\cite{DeganoNM89} and by Rabinovich
and Trakhtenbrot~\cite{journals/fundinf/RabinovichT88} and then for
event structures by van~Glabbeek and Goltz
in~\cite{DBLP:conf/mfcs/GlabbeekG89} and for Petri nets by Best
\etal~in~\cite{DBLP:journals/acta/BestDKP91}.  One reason for its
popularity is that it is a congruence under action
refinement~\cite{DBLP:conf/mfcs/GlabbeekG89,DBLP:journals/acta/BestDKP91},
another its good decidability properties: it has been shown to be
decidable for safe Petri nets by Montanari and
Pistore~\cite{DBLP:conf/stacs/MontanariP97}.  As a contrast, its cousin
\emph{hereditary} hp-bisimilarity is shown undecidable for $1$-safe
Petri nets by Jurdzi{\'n}ski, Nielsen and Srba
in~\cite{DBLP:journals/iandc/JurdzinskiNS03}.

\emph{Higher-dimensional automata} (or \emph{HDA}) is another
non-interleaving formalism for reasoning about behavior of concurrent
systems.  Introduced by Pratt~\cite{Pratt91-geometry} and
van~Glabbeek~\cite{Glabbeek91-hda} in 1991 for the purpose of a
\emph{geometric} interpretation to the theory of concurrency, it has
since been shown by van~Glab\-beek~\cite{DBLP:journals/tcs/Glabbeek06}
that HDA provide a generalization (up to hp-bisimilarity) to ``the main
models of concurrency proposed in the
literature''~\cite{DBLP:journals/tcs/Glabbeek06}, including the ones
mentioned above.  Hence HDA are useful as a tool for comparing and
relating different models, and also as a modeling formalism by
themselves.

HDA are geometric in the sense that they are very similar to the
\emph{simplicial complexes} used in algebraic topology, and research on
HDA has drawn on a lot of tools and methods from geometry and algebraic
topology such as
homotopy~\cite{DBLP:journals/tcs/FajstrupRG06,DBLP:journals/mscs/Gaucher00},
homology~\cite{DBLP:conf/concur/GoubaultJ92,journals/ctopgd/Gaucher02},
and model
categories~\cite{journals/tac/Gaucher11,journals/tac/Gaucher09}, see
also the survey~\cite{DBLP:journals/mscs/Goubault00a}.

In this paper we give a geometric interpretation to hp-bisimilarity for
HDA, using the open-maps approach introduced by Joyal, Nielsen and
Winskel in~\cite{DBLP:journals/iandc/JoyalNW96} and results from a
previous paper~\cite{Fahrenberg05-hda} by the first author.  Using this
interpretation, we show that hp-bisimilarity for HDA has a
characterization directly in terms of (higher-dimensional)
\emph{transitions} of the HDA, rather than in terms of runs as \eg~for
Petri nets~\cite{DBLP:conf/mfcs/FroschleH99}.

Our results imply \emph{decidability} of hp-bisimilarity for finite
HDA.  They also put hp-bisimilarity firmly into the open-maps framework
of~\cite{DBLP:journals/iandc/JoyalNW96} and tighten the connections
between bisimilarity and weak topological
\emph{fibrations}~\cite{AdamekHRT02-weak,KurzR05-weak}.

Due to lack of space, we have had to confer all proofs of this paper to
a separate appendix.

\section{Higher-Dimensional Automata}
\label{sec:hda}

As a formalism for concurrent behavior, HDA have the specific feature
that they can express all higher-order dependencies between events in a
concurrent system.  Like for transition systems, they consist of states
and transitions which are labeled with events.  Now if two transitions
from a state, with labels $a$ and $b$ for example, are independent, then
this is expressed by the existence of a \emph{two-dimensional}
transition with label $ab$.  Fig.~\ref{fig:independence} shows two
examples; on the left, transitions $a$ and $b$ are independent, on the
right, they can merely be executed in any order.  Hence for HDA, as
indeed for any formalism employing the so-called \emph{true concurrency}
paradigm, the algebraic law $a| b= a. b+ b. a$ does \emph{not} hold;
concurrency is not the same as interleaving.

The above considerations can equally be applied to sets of more than two
events: if three events $a$, $b$, $c$ are independent, then this is
expressed using a three-dimensional transition labeled $abc$.  Hence
this is different from mutual pairwise independence (expressed by
transitions $ab$, $ac$, $bc$), a distinction which cannot be made in
formalisms such as asynchronous transition
systems~\cite{Bednarczyk87-async,DBLP:journals/cj/Shields85} or
transition systems with independence~\cite{WinskelN95-Models} which only
consider binary independence relations.

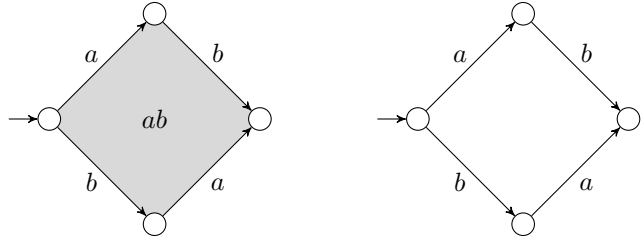
\begin{figure}
  \centering
  \begin{tikzpicture}[->,>=stealth',auto,scale=.7]
    \tikzstyle{every node}=[font=\footnotesize,initial text=]
    \tikzstyle{every state}=[fill=white,shape=circle,inner
    sep=.5mm,minimum size=3mm]
    \begin{scope}
      \path[fill=black!15] (0,0) to (2,2) to (4,0) to (2,-2);
      \node[state,initial] (s0) at (0,0) {};
      \node[state] (s1) at (2,2) {};
      \node[state] (s2) at (2,-2) {};
      \node[state] (s3) at (4,0) {};
      \path (s0) edge (s1);
      \path (s0) edge (s2);
      \path (s1) edge (s3);
      \path (s2) edge (s3);
      \node at (.8,1.2) {$a$};
      \node at (.8,-1.2) {$b$};
      \node at (3.2,1.25) {$b$};
      \node at (3.2,-1.25) {$a$};
      \node at (2,0) {$ab$};
    \end{scope}
    \begin{scope}[xshift=7cm]
      \node[state,initial] (s0) at (0,0) {};
      \node[state] (s1) at (2,2) {};
      \node[state] (s2) at (2,-2) {};
      \node[state] (s3) at (4,0) {};
      \path (s0) edge (s1);
      \path (s0) edge (s2);
      \path (s1) edge (s3);
      \path (s2) edge (s3);
      \node at (.8,1.2) {$a$};
      \node at (.8,-1.2) {$b$};
      \node at (3.2,1.25) {$b$};
      \node at (3.2,-1.25) {$a$};
    \end{scope}
  \end{tikzpicture}
  \caption{%
    \label{fig:independence}
    HDA for the CCS expressions $a| b$ (left) and $a. b+ b. a$ (right).
    In the left HDA, the square is filled in by a two-dimensional
    transition labeled $ab$, signifying independence of events $a$ and
    $b$.  On the right, $a$ and $b$ are not independent.%
  }
\end{figure}

An unlabeled HDA is essentially a pointed precubical set as defined
below.  For labeled HDA, one can pass to an arrow category; this is what
we shall do in Section~\ref{sec:labels}.  Until then, we concentrate on
the unlabeled case.

A \emph{precubical set} is a graded set $X= \{ X_n\}_{ n\in \Nat}$
together with mappings $\delta_k^\nu:X_n\to X_{ n- 1}$, $k\in\{ 1,\dots,
n\}$, $\nu\in\{ 0, 1\}$, satisfying the \emph{precubical identity}
\begin{equation}
  \label{eq:pcub}
  \delta_k^\nu \delta_\ell^\mu= \delta_{ \ell- 1}^\mu
  \delta_k^\nu \qquad( k< \ell)\,.
\end{equation}
The mappings $\delta_k^\nu$ are called \emph{face maps}, and elements of
$X_n$ are called \emph{$n$-cubes}.  As above, we shall usually omit the
extra subscript $(n)$ in the face maps.  Faces $\delta_k^0 x$ of an
element $x\in X$ are to be thought of as \emph{lower faces}, $\delta_k^1
x$ as \emph{upper faces}.  The precubical identity expresses the fact
that $( n- 1)$-faces of an $n$-cube meet in common $( n- 2)$-faces, see
Fig.~\ref{fig:2cubefaces} for an example of a $2$-cube and its faces.

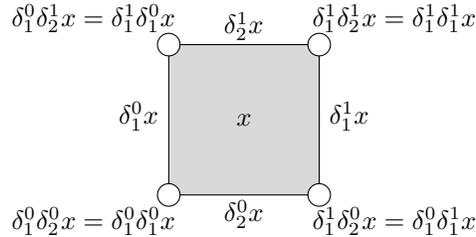
\begin{figure}[b]
  \centering
  \begin{tikzpicture}[->,>=stealth',auto,scale=1]
    \tikzstyle{every node}=[font=\footnotesize]
    \tikzstyle{every state}=[fill=white,shape=circle,inner
    sep=.5mm,minimum size=3mm]
    \path[fill=black!15] (0,0) to (2,0) to (2,2) to (0,2) to (0,0);
    \path (0,0) edge (2,0);
    \path (2,0) edge (2,2);
    \path (0,0) edge (0,2);
    \path (0,2) edge (2,2);
    \node[state] at (0,0) {};
    \node[state] at (2,0) {};
    \node[state] at (0,2) {};
    \node[state] at (2,2) {};
    \node at (1,1) {$x$};
    \node at (-.4,1.05) {$\delta_1^0 x$};
    \node at (2.4,1.05) {$\delta_1^1 x$};
    \node at (1,-.25) {$\delta_2^0 x$};
    \node at (1,2.25) {$\delta_2^1 x$};
    \node at (-1,-.35) {$\delta_1^0 \delta_2^0 x= \delta_1^0
      \delta_1^0 x$};
    \node at (-1,2.35) {$\delta_1^0 \delta_2^1 x= \delta_1^1
      \delta_1^0 x$};
    \node at (3,-.35) {$\delta_1^1 \delta_2^0 x= \delta_1^0
      \delta_1^1 x$};
    \node at (3,2.35) {$\delta_1^1 \delta_2^1 x= \delta_1^1
      \delta_1^1 x$};
  \end{tikzpicture}
  \caption{%
    \label{fig:2cubefaces}
    A $2$-cube $x$ with its four faces $\delta_1^0 x$, $\delta_1^1 x$,
    $\delta_2^0 x$, $\delta_2^1 x$ and four corners.
  }
\end{figure}

\emph{Morphisms} $f: X\to Y$ of precubical sets are graded mappings
$f=\{ f_n: X_n\to Y_n\}_{ n\in \Nat}$ which commute with the face maps:
$\delta_k^\nu\circ f_n= f_{ n- 1}\circ \delta_k^\nu$ for all $n\in
\Nat$, $k\in\{ 1,\dots, n\}$, $\nu\in\{ 0, 1\}$.  This defines a
category $\pCub$ of precubical sets and morphisms.

A \emph{pointed} precubical set is a precubical set $X$ with a
specified $0$-cube $i\in X_0$, and a pointed morphism is one which
respects the point.  This defines a category which is isomorphic to
the comma category $*\downarrow \pCub$, where $*\in \pCub$ is the
precubical set with one $0$-cube and no other $n$-cubes.  Note that
$*$ is \emph{not} terminal in $\pCub$ (instead, the terminal object is
the infinite-dimensional precubical set with one cube in every
dimension).

\begin{definition}
  \label{defi:hda}
  The category of \emph{higher-dimensional automata} is the comma
  category $\HDA= *\downarrow \pCub$, with objects pointed
  precubical sets and morphisms commutative diagrams
  \begin{equation*}
    \xymatrix@C=1.3em@R=1.1em{%
      & {*} \ar[dl] \ar[dr] \\ X \ar[rr]_f && Y\,.
    }
  \end{equation*}
\end{definition}

Hence a one-dimensional HDA is a transition system; indeed, the category
of transition systems~\cite{WinskelN95-Models} is isomorphic to the full
subcategory of $\HDA$ spanned by the one-dimensional objects.  Similarly
one can show~\cite{Goubault02-cmcim} that the category of asynchronous
transition systems is isomorphic to the full subcategory of $\HDA$
spanned by the (at most) two-dimensional objects.  The category $\HDA$
as defined above was used in~\cite{Fahrenberg05-hda} to provide a
categorical framework (in the spirit of~\cite{WinskelN95-Models}) for
parallel composition of HDA.  In this article we also introduced a
notion of bisimilarity which we will review in the next section.

\section{Path Objects, Open Maps and Bisimilarity}

With the purpose of introducing bisimilarity via \emph{open maps} in the
sense of~\cite{DBLP:journals/iandc/JoyalNW96}, we identify here a
subcategory of $\HDA$ consisting of path objects and path-extending
morphisms.  We say that a precubical set $X$ is a \emph{precubical path
  object} if there is a (necessarily unique) sequence $( x_1,\dots,
x_m)$ of elements in $X$ such that $x_i\ne x_j$ for $i\ne j$,
\begin{itemize}
\item for each $x\in X$ there is $j\in\{ 1,\dots, m\}$ for which
  $\smash{ x= \delta_{ k_1}^{ \nu_1}\cdots \delta_{ k_p}^{ \nu_p} x_j}$
  for some indices $\nu_1,\dots, \nu_p$ and a \emph{unique} sequence
  $k_1<\dots< k_p$, and
\item for each $j= 1,\dots, m- 1$, there is $k\in \Nat$ for which $x_j=
  \delta_k^0 x_{ j+ 1}$ or $x_{ j+ 1}= \delta_k^1 x_j$.
\end{itemize}

Note that precubical path objects are \emph{non-selflinked} in the sense
of~\cite{DBLP:journals/tcs/FajstrupRG06}.  If $X$ and $Y$ are precubical
path objects with representations $( x_1,\dots, x_m)$, $( y_1,\dots,
y_p)$, then a morphism $f: X\to Y$ is called a \emph{cube path
  extension} if $x_j= y_j$ for all $j= 1,\dots, m$ (hence $m\le p$).

\begin{definition}
  The category $\HDP$ of \emph{higher-dimensional paths} is the
  subcategory of $\HDA$ which as objects has pointed precubical paths,
  and whose morphisms are generated by isomorphisms and pointed cube
  path extensions.
\end{definition}

A \emph{cube path} in a precubical set $X$ is a morphism $P\to X$ from a
precubical path object $P$.  In elementary terms, this is a sequence $(
x_1,\dots, x_m)$ of elements of $X$ such that for each $j= 1,\dots, m-
1$, there is $k\in \Nat$ for which $x_j= \delta_k^0 x_{ j+ 1}$ (start of
new part of a computation) or $x_{ j+ 1}= \delta_k^1 x_j$ (end of a
computation part).  We show an example of a cube path in
Fig.~\ref{fig:cubepath}.

A cube path in a HDA $i: *\to X$ is \emph{pointed} if $x_1= i$, hence if
it is a pointed morphism $P\to X$ from a higher-dimensional path $P$.
We will say that a cube path $( x_1,\dots, x_m)$ is \emph{from} $x_1$
\emph{to} $x_m$, and that a cube $x\in X$ in a HDA $X$ is
\emph{reachable} if there is a pointed cube path to $x$ in $X$.

\begin{figure}
  \centering
  \begin{tikzpicture}[->,>=stealth',auto,scale=.9]
    \tikzstyle{every node}=[font=\footnotesize,initial text=]
    \tikzstyle{every state}=[fill=white,shape=circle,inner
    sep=.5mm,minimum size=3mm]
    \path[fill=black!15] (2,0) to (4,0) to (4,2) to (2,2);
    \node[state,initial] (i) at (0,0) {};
    \node[state] (x) at (2,0) {};
    \node[state] (z) at (4,2) {};
    \path (i) edge (x); 
    \path (x) edge (4,0); 
    \path (4,0) edge (z); 
    \path (z) edge (6,2); 
    \node at (0,-.35) {$i$};
    \node at (1,-.25) {$a$};
    \node at (2,-.4) {$x$};
    \node at (3,-.2) {$b$};
    \node at (3,1) {$bc$};
    \node at (4.2,.97) {$c$};
    \node at (4,2.3) {$z$};
    \node at (5,2.2) {$d$};
  \end{tikzpicture}
  \caption{%
    \label{fig:cubepath}
    Graphical representation of the two-dimensional cube path $( i, a,
    x, b, bc, c, z, d)$.  Its computational interpretation is that $a$
    is executed first, then execution of $b$ starts, and while $b$ is
    running, $c$ starts to execute.  After this, $b$ finishes, then $c$,
    and then execution of $d$ is started.  Note that the computation is
    partial, as $d$ does not finish.
  }
\end{figure}
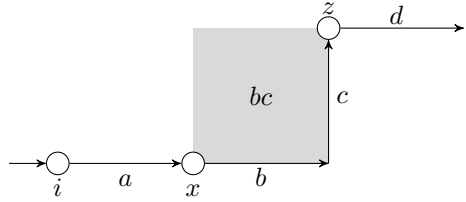

Cube paths can be \emph{concatenated} if the end of one is compatible
with the beginning of the other: If $\rho=( x_1,\dots, x_m)$ and
$\sigma=( y_1,\dots, y_p)$ are cube paths with $y_1= \delta_k^1 x_m$ or
$x_m= \delta_k^0 y_1$ for some $k$, then their \emph{concatenation} is
the cube path $\rho* \sigma=( x_1,\dots, x_m, y_1,\dots, y_p)$.  We say
that $\rho$ is a \emph{prefix} of $\chi$ and write $\rho\sqsubseteq
\chi$ if there is a cube path $\rho$ for which $\chi=\rho* \sigma$.

\begin{definition}
  \label{defi:open_hda}
  A pointed morphism $f: X\to Y$ in $\HDA$ is an \emph{open map} if it
  has the right lifting property with respect to $\HDP$, \ie~if it is
  the case that there is a lift $r$ in any commutative diagram as below,
  for morphisms $g: P\to Q\in \HDP$, $p: P\to X, q: Q\to Y\in \HDA$:
  \begin{equation*}
    \xymatrix{%
      P \ar[r]^p \ar[d]_g & X \ar[d]^f \\ Q \ar[r]_q \ar@{.>}[ur]|r &
      Y
    }
  \end{equation*}
  HDA $X$, $Y$ are \emph{bisimilar} if there is $Z\in \HDA$ and a span
  of open maps $X\from Z\to Y$ in $\HDA$.
\end{definition}

It follows straight from the definition that composites of open maps are
again open.  By the next lemma, morphisms are open precisely when they
have a zig-zag property similar to the one
of~\cite{DBLP:journals/iandc/JoyalNW96}.

\begin{lemma}
  \label{lem:open}
  For a morphism $f: X\to Y\in \HDA$, the following are equivalent:
  \begin{enumerate}
  \item\label{enu:open.box} $f$ is open;
  \item\label{enu:open.onestep} for any reachable $x_1\in X$ and any
    $y_2\in Y$ with $f( x_1)= \delta_k^0 y_2$ for some $k$, there is
    $x_2\in X$ for which $x_1= \delta_k^0 x_2$ and $y_2= f( x_2)$;
  \item\label{enu:open.cubepath} for any reachable $x_1\in X$ and any
    cube path $( y_1,\dots, y_m)$ in $Y$ with $y_1= f( x_1)$, there is a
    cube path $( x_1,\dots, x_m)$ in $X$ for which $y_j= f( x_j)$ for
    all $j= 1,\dots, m$.
  \end{enumerate}
\end{lemma}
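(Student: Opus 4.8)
The plan is to establish the cycle of implications
$(\ref{enu:open.box})\Rightarrow(\ref{enu:open.onestep})\Rightarrow(\ref{enu:open.cubepath})\Rightarrow(\ref{enu:open.box})$.
For $(\ref{enu:open.box})\Rightarrow(\ref{enu:open.onestep})$, suppose $f$ is open and let $x_1\in X$ be reachable via a pointed cube path, with $f(x_1)=\delta_k^0 y_2$ for some $y_2\in Y$. I would build the test diagram out of two higher-dimensional paths: let $P$ be the path object realizing the chosen pointed cube path to $x_1$, and let $Q$ extend $P$ by the single cube $y_2$ (glued along $\delta_k^0 y_2 = $ the last cube of $P$ mapped into $Y$). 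The map $p:P\to X$ is the given cube path; $q:Q\to Y$ sends the old part via $f\circ p$ and the new cube to $y_2$; and $g:P\to Q$ is the evident cube path extension. Commutativity holds by construction, so openness gives a lift $r:Q\to X$; reading off $r$ on the new cube yields the desired $x_2$ with $x_1=\delta_k^0 x_2$ and $f(x_2)=y_2$.

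For $(\ref{enu:open.onestep})\Rightarrow(\ref{enu:open.cubepath})$, I would induct on the length $m$ of the cube path $(y_1,\dots,y_m)$ in $Y$. The base case $m=1$ is trivial. For the inductive step, suppose we have already lifted $(y_1,\dots,y_{m-1})$ to $(x_1,\dots,x_{m-1})$ with $f(x_j)=y_j$; note each $x_j$ is reachable since $x_1$ is and the lifted path witnesses reachability. There are two cases for the last step of the cube path. If $y_m=\delta_k^1 y_{m-1}$, we simply set $x_m=\delta_k^1 x_{m-1}$; then $f(x_m)=\delta_k^1 f(x_{m-1})=\delta_k^1 y_{m-1}=y_m$ by naturality of $f$, and $(x_1,\dots,x_m)$ is a valid cube path. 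If $y_{m-1}=\delta_k^0 y_m$, then since $f(x_{m-1})=y_{m-1}=\delta_k^0 y_m$ and $x_{m-1}$ is reachable, clause~(\ref{enu:open.onestep}) supplies $x_m$ with $x_{m-1}=\delta_k^0 x_m$ and $f(x_m)=y_m$, again extending the cube path.

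For $(\ref{enu:open.cubepath})\Rightarrow(\ref{enu:open.box})$, consider a lifting problem with $g:P\to Q$ in $\HDP$, $p:P\to X$, $q:Q\to Y$. Writing the higher-dimensional path $Q$ as a sequence $(y_1,\dots,y_m)$ of cubes and $P$ as a prefix of it, I would argue that $q$ is determined by a pointed cube path $(q(y_1),\dots,q(y_m))$ in $Y$, of which $(p(y_1),\dots,p(y_\ell))$ (for $\ell\le m$ the length of $P$) is a lift starting at the reachable cube $p(y_1)=p(i)$, which equals the basepoint of $X$. Applying clause~(\ref{enu:open.cubepath}) to extend this partial lift along the cube path $(q(y_\ell),\dots,q(y_m))$ produces cubes $x_\ell,\dots,x_m$ in $X$; together with $p(y_1),\dots,p(y_\ell)$ these assemble into a cube path in $X$, which defines a morphism $r:Q\to X$. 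One checks $r\circ g=p$ (the lift agrees with $p$ on the prefix $P$, since we did not touch those cubes) and $f\circ r=q$ (by construction $f(x_j)=q(y_j)$ for all $j$), so $r$ is the required diagonal filler, and $f$ is open.

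The main obstacle is the bookkeeping in the last implication: one must be careful that a morphism out of a higher-dimensional path object $Q$ really is the same data as a cube path, that the generators (isomorphisms and cube path extensions) behave well under this correspondence so that it suffices to treat $P$ as a literal prefix of $Q$, and that distinct lifts $x_\ell,\dots,x_m$ consistently extend to a bona fide precubical morphism rather than merely a sequence — in particular that no unintended identifications among cubes are forced. These points follow from the elementary description of cube paths given before Definition~\ref{defi:open_hda} and the non-selflinkedness of path objects, but they require a careful, if routine, verification.
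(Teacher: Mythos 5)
Your proposal is correct and follows essentially the same route as the paper's proof: the same one-cube test extension for \eqref{enu:open.box}$\Rightarrow$\eqref{enu:open.onestep}, the same two-case induction for \eqref{enu:open.onestep}$\Rightarrow$\eqref{enu:open.cubepath}, and the same prefix-plus-lifted-cube-path construction of the diagonal filler for \eqref{enu:open.cubepath}$\Rightarrow$\eqref{enu:open.box} (the paper likewise reduces to the case where $g$ is a literal prefix inclusion ``up to isomorphism''). The bookkeeping points you flag are exactly the ones the paper passes over silently, so nothing is missing.
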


\begin{theorem}
  \label{thm:bisim}
  For HDA $i: *\to X$, $j: * \to Y$, the following are equivalent:
  \begin{enumerate}
  \item\label{enu:bisim.box} $X$ and $Y$ are bisimilar;
  \item\label{enu:bisim.onestep} there exists a precubical subset
    $R\subseteq X\times Y$ for which $( i, j)\in R$, and such that for
    all reachable $x_1\in X$, $y_1\in Y$ with $( x_1, y_1)\in R$,
    \begin{itemize}
    \item for any $x_2\in X$ for which $x_1= \delta_k^0 x_2$ for some
      $k$, there exists $y_2\in Y$ for which $y_1= \delta_k^0 y_2$ and
      $( x_2, y_2)\in R$,
    \item for any $y_2\in Y$ for which $y_1= \delta_k^0 y_2$ for some
      $k$, there exists $x_2\in X$ for which $x_1= \delta_k^0 x_2$ and
      $( x_2, y_2)\in R$;
    \end{itemize}
  \item\label{enu:bisim.cubepath} there exists a precubical subset
    $R\subseteq X\times Y$ for which $( i, j)\in R$, and such that for
    all reachable $x_1\in X$, $y_1\in Y$ with $( x_1, y_1)\in R$,
    \begin{itemize}
    \item for any cube path $( x_1,\dots, x_m)$ in $X$, there exists a
      cube path $( y_1,\dots, y_m)$ in $Y$ with $( x_p, y_p)\in R$ for
      all $p= 1,\dots, m$,
    \item for any cube path $( y_1,\dots, y_m)$ in $Y$, there exists a
      cube path $( x_1,\dots, x_m)$ in $X$ with $( x_p, y_p)\in R$ for
      all $p= 1,\dots, m$.
    \end{itemize}
  \end{enumerate}
\end{theorem}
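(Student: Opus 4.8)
The plan is to read the three conditions of the theorem through Lemma~\ref{lem:open} and to pass between a span of open maps and a bisimulation-style subobject $R$ by two standard devices: taking the \emph{image} of the mediating object inside $X\times Y$, and, conversely, the two \emph{projections} out of $R$. Concretely I would establish \eqref{enu:bisim.box}$\Leftrightarrow$\eqref{enu:bisim.onestep} directly, then \eqref{enu:bisim.onestep}$\Rightarrow$\eqref{enu:bisim.cubepath} by a routine induction along cube paths, and close the circle with \eqref{enu:bisim.cubepath}$\Rightarrow$\eqref{enu:bisim.box}.

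For \eqref{enu:bisim.box}$\Rightarrow$\eqref{enu:bisim.onestep}: given a span of open maps $X\tfrom{f}Z\tto{g}Y$, first replace $Z$ by its subobject of reachable cubes, noting that the restricted maps stay open — a one-step lift of a reachable cube is again reachable, so Lemma~\ref{lem:open}\,\eqref{enu:open.onestep} is preserved. Then set $R=\langle f,g\rangle(Z)\subseteq X\times Y$; face maps are preserved, so $R$ is a precubical subset, and $(i,j)\in R$. For reachable $x_1,y_1$ with $(x_1,y_1)\in R$, pick a (now reachable) preimage $z_1$; given $x_1=\delta_k^0x_2$, openness of $f$ provides $z_2$ with $z_1=\delta_k^0z_2$ and $f(z_2)=x_2$, whence $y_2:=g(z_2)$ witnesses the first clause of~\eqref{enu:bisim.onestep}, and openness of $g$ takes care of the symmetric clause. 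For \eqref{enu:bisim.onestep}$\Rightarrow$\eqref{enu:bisim.box} I would take $Z:=R$ with its two restricted projections: these are pointed morphisms, and Lemma~\ref{lem:open}\,\eqref{enu:open.onestep} for the projection to $X$ is literally the first clause of~\eqref{enu:bisim.onestep} (and the second clause for the projection to $Y$), so the projections are open.

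For \eqref{enu:bisim.onestep}$\Rightarrow$\eqref{enu:bisim.cubepath}: fix reachable $x_1,y_1$ with $(x_1,y_1)\in R$ and a cube path $(x_1,\dots,x_m)$ in $X$, and build $(y_1,\dots,y_m)$ step by step. At a ``start'' move $x_p=\delta_k^0x_{p+1}$, with $x_p$ again reachable, apply the first clause of~\eqref{enu:bisim.onestep}; at an ``end'' move $x_{p+1}=\delta_k^1x_p$ put $y_{p+1}:=\delta_k^1y_p$, which stays in $R$ because $R$ is a precubical subset. The $Y$-to-$X$ direction is symmetric. Note that this in fact yields the pairs $(x_p,y_p)$ as one cube path in $X\times Y$, with the same face indices used on both sides.

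The delicate step, which I expect to be the real obstacle, is \eqref{enu:bisim.cubepath}$\Rightarrow$\eqref{enu:bisim.box} — the step where the theorem's actual point, that the history-style condition~\eqref{enu:bisim.cubepath} is no stronger than the transition-style condition~\eqref{enu:bisim.onestep}, has to be earned. The naive choice $Z:=R$ fails: \eqref{enu:bisim.cubepath} only hands one a cube path of $Y$ matching the given cube path of $X$ \emph{pairwise} through $R$, and such a pair of paths need not assemble into a cube path of $X\times Y$. A dimension count (using $\dim x_p=\dim y_p$ for $(x_p,y_p)\in R$) does force the two paths to make their ``start'' and ``end'' moves in lockstep, but it leaves the face-map indices on the two sides unrelated, so Lemma~\ref{lem:open}\,\eqref{enu:open.cubepath} cannot be applied to the projection $R\to X$. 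The remedy is to take for the mediating object not $R$ itself but the precubical set of \emph{synchronized} reachable cube paths through $R$ — the relevant subobject of the unfolding of $X\times Y$ — and to check that its two projections are open; this is where the unfolding machinery of the abstract (and of~\cite{Fahrenberg05-hda}) enters, with the payoff that in the unfolding the path-indexed and step-indexed lifting conditions coincide.
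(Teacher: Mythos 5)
Most of your proposal follows the paper's own route: \eqref{enu:bisim.box}$\Rightarrow$\eqref{enu:bisim.onestep} by taking the image of the mediating object of the span inside $X\times Y$ (your preliminary restriction to the reachable part of $Z$ is a sensible sharpening, since Lemma~\ref{lem:open} constrains open maps only at reachable cubes; to make that restriction an object of $\HDA$ you should note that lower faces of reachable cubes are again reachable, which follows from Lemma~\ref{lem:hom_fullcube}), then \eqref{enu:bisim.onestep}$\Rightarrow$\eqref{enu:bisim.cubepath} by the step-by-step induction, and the return to a span of open maps by equipping $R$ itself with its two projections and invoking Lemma~\ref{lem:open}. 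The paper performs this last step from \eqref{enu:bisim.cubepath} rather than from \eqref{enu:bisim.onestep}, but the device is identical.

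The genuine gap is your treatment of \eqref{enu:bisim.cubepath}$\Rightarrow$\eqref{enu:bisim.box}. You are right to flag that, read literally, \eqref{enu:bisim.cubepath} only supplies a pairwise $R$-related path whose face indices need not agree with those of the given path (the dimension count fixes the move types but not the indices), so that neither criterion of Lemma~\ref{lem:open} can be verified for $\pi_1: R\to X$ as it stands; the paper's one-sentence argument silently assumes the indices agree. But your proposed remedy does not repair this: the ``precubical set of synchronized reachable cube paths through $R$'' is, up to isomorphism, the unfolding of (the reachable part of) $R$, and openness of its projection to $X$ at a path ending in $(x_1,y_1)$, confronted with $x_1=\delta_k^0 x_2$, again demands some $y_2$ with $y_1=\delta_k^0 y_2$ for the \emph{same} $k$ and $(x_2,y_2)\in R$, so that the synchronized path can be prolonged inside $R$ --- exactly what the unsynchronized reading of \eqref{enu:bisim.cubepath} does not provide. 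Passing to paths or unfoldings relocates the index-matching requirement without discharging it, so your cycle of implications is not closed. The repair is elementary and is already implicit in your own induction for \eqref{enu:bisim.onestep}$\Rightarrow$\eqref{enu:bisim.cubepath}: that induction produces the matching path with the same face index at every step, i.e.\ the pairs $(x_p,y_p)$ form a cube path in the precubical subset $R\subseteq X\times Y$. With \eqref{enu:bisim.cubepath} taken in this synchronized form (which is how the paper uses it), the ``naive'' choice $Z:=R$ works after all: a cube $(x_1,y_1)$ reachable in $R$ projects to reachable $x_1$, $y_1$, and the synchronized matching path is literally a lift verifying Lemma~\ref{lem:open}\eqref{enu:open.cubepath} for both projections. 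No unfolding machinery is needed for this theorem; it enters only later, for Proposition~\ref{thm:hombis} and Theorem~\ref{th:hbis=bis}. If you insist on the strictly literal, unsynchronized reading of \eqref{enu:bisim.cubepath}, then its implication of \eqref{enu:bisim.onestep} requires an argument that your sketch does not contain.
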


Note that the requirement that $R$ be a precubical subset, in
items~\eqref{enu:bisim.onestep} and~\eqref{enu:bisim.cubepath} above, is
equivalent to saying that whenever $( x, y)\in R$, then also $(
\delta_k^\nu x, \delta_k^\nu y)\in R$ for any $k$ and $\nu\in\{ 0, 1\}$.

\section{Homotopies and Unfoldings}
\label{sec:hdp}

In order to reason about hp-bisimilarity, we need to introduce in which
cases different cube paths are equivalent due to independence of
actions.  Following~\cite{DBLP:journals/tcs/Glabbeek06}, we model this
equivalence by a combinatorial version of \emph{homotopy} which is an
extension of the equivalence defining \emph{Mazurkiewicz
  traces}~\cite{Mazurkiewicz77}.

We say that cube paths $( x_1,\dots, x_m)$, $( y_1,\dots, y_m)$ are
\emph{adjacent} if $x_1= y_1$, $x_m= y_m$, there is precisely one index
$p\in\{ 1,\dots, m\}$ at which $x_p\ne y_p$, and
\begin{itemize}
\item $x_{ p- 1}= \delta_k^0 x_p$, $x_p= \delta_\ell^0 x_{ p+ 1}$, $y_{
    p- 1}= \delta_{ \ell- 1}^0 y_p$, and $y_p= \delta_k^0 y_{ p+ 1}$ for
  some $k< \ell$, or vice versa,
\item $x_p= \delta_k^1 x_{ p- 1}$, $x_{ p+ 1}= \delta_\ell^1 x_p$, $y_p=
  \delta_{ \ell- 1}^1 y_{ p- 1}$, and $y_{ p+ 1}= \delta_k^1 y_p$ for
  some $k< \ell$, or vice versa,
\item $x_p= \delta_k^0 \delta_\ell^1 y_p$, $y_{ p- 1}= \delta_k^0 y_p$,
  and $y_{ p+ 1}= \delta_\ell^1 y_p$ for some $k< \ell$, or vice versa,
  or
\item $x_p= \delta_k^1 \delta_\ell^0 y_p$, $y_{ p- 1}= \delta_\ell^0
  y_p$, and $y_{ p+ 1}= \delta_k^1 y_p$ for some $k< \ell$, or vice
  versa.
\end{itemize}

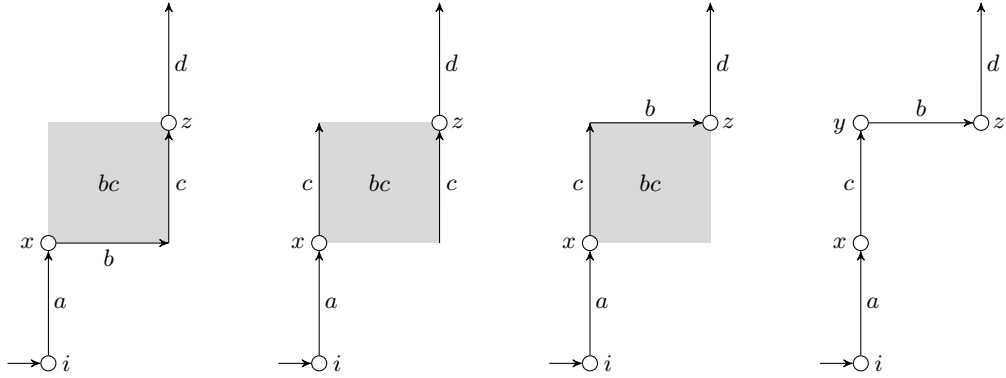
\begin{figure}
  \centering
  \begin{tikzpicture}[->,>=stealth',auto,scale=.8]
    \tikzstyle{every node}=[font=\small,initial text=]
    \tikzstyle{every state}=[fill=white,shape=circle,inner
    sep=.5mm,minimum size=2mm]
    \begin{scope}
      \path[fill=black!15] (2,0) to (4,0) to (4,2) to (2,2);
      \node[state,initial] (i) at (2,-2) {};
      \node[state] (x) at (2,0) {};
      \node[state] (z) at (4,2) {};
      \path (i) edge (x); 
      \path (x) edge (4,0); 
      \path (4,0) edge (z); 
      \path (z) edge (4,4); 
      \node at (2.3,-2) {$i$};
      \node at (1.65,0) {$x$};
      \node at (2.2,-1) {$a$};
      \node at (3,-.25) {$b$};
      \node at (3,1) {$bc$};
      \node at (4.2,.97) {$c$};
      \node at (4.3,2) {$z$};
      \node at (4.2,3) {$d$};
    \end{scope}
    \begin{scope}[xshift=4.5cm]
      \path[fill=black!15] (2,0) to (4,0) to (4,2) to (2,2);
      \node[state,initial] (i) at (2,-2) {};
      \node[state] (x) at (2,0) {};
      \node[state] (z) at (4,2) {};
      \path (i) edge (x); 
      \path (x) edge (2,2);
      \path (4,0) edge (z); 
      \path (z) edge (4,4); 
      \node at (2.3,-2) {$i$};
      \node at (1.65,0) {$x$};
      \node at (2.2,-1) {$a$};
      \node at (1.8,.97) {$c$};
      \node at (3,1) {$bc$};
      \node at (4.2,.97) {$c$};
      \node at (4.3,2) {$z$};
      \node at (4.2,3) {$d$};
    \end{scope}
    \begin{scope}[xshift=9cm]
      \path[fill=black!15] (2,0) to (4,0) to (4,2) to (2,2);
      \node[state,initial] (i) at (2,-2) {};
      \node[state] (x) at (2,0) {};
      \node[state] (z) at (4,2) {};
      \path (i) edge (x); 
      \path (x) edge (2,2);
      \path (2,2) edge (z);
      \path (z) edge (4,4); 
      \node at (2.3,-2) {$i$};
      \node at (1.65,0) {$x$};
      \node at (2.2,-1) {$a$};
      \node at (1.8,.97) {$c$};
      \node at (3,2.25) {$b$};
      \node at (3,1) {$bc$};
      \node at (4.3,2) {$z$};
      \node at (4.2,3) {$d$};
    \end{scope}
    \begin{scope}[xshift=13.5cm]
      \node[state,initial] (i) at (2,-2) {};
      \node[state] (x) at (2,0) {};
      \node[state] (y) at (2,2) {};
      \node[state] (z) at (4,2) {};
      \path (i) edge (x); 
      \path (x) edge (y);
      \path (y) edge (z);
      \path (z) edge (4,4); 
      \node at (2.3,-2) {$i$};
      \node at (1.65,0) {$x$};
      \node at (1.65,1.95) {$y$};
      \node at (2.2,-1) {$a$};
      \node at (1.8,.97) {$c$};
      \node at (3,2.25) {$b$};
      \node at (4.3,2) {$z$};
      \node at (4.2,3) {$d$};
    \end{scope}
  \end{tikzpicture}
  \caption{%
    \label{fig:homcubepath}
    Graphical representation of the cube path homotopy $( i, a, x, b, bc, c, z,
    d)\sim$ $( i, a, x, c, bc, c, z, d)\sim( i, a, x, c, bc, b, z, d)\sim(
    i, a, x, c, y, b, z, d)$.  }
\end{figure}

\emph{Homotopy} of cube paths is the reflexive, transitive closure of
the adjacency relation.  We denote homotopy of cube paths using the
symbol $\sim$, and the homotopy class of a cube path $( x_1,\dots, x_m)$
is denoted $[ x_1,\dots, x_m]$.  The intuition of adjacency is rather
simple, even though the combinatorics may look complicated, see
Fig.~\ref{fig:homcubepath}.  Note that adjacencies come in two basic
``flavors'': the first two above in which the dimensions of $x_\ell$ and
$y_\ell$ are the same, and the last two in which they differ by $2$.

The following lemma shows that, as expected, cube paths entirely
contained in one cube are homotopic (provided that they share
endpoints).

\begin{lemma}
  \label{lem:hom_fullcube}
  Let $x\in X_n$ in a precubical set $X$ and $( k_1,\dots, k_n)$, $(
  \ell_1,\dots, \ell_n)$ sequences of indices with $k_j, \ell_j\le j$
  for all $j= 1,\dots, n$.  Let $x_j= \delta_{ k_j}^0\cdots \delta_{
    k_n}^0 x$, $y_j= \delta_{ \ell_j}^0\cdots \delta_{ \ell_n}^0 x$.
  Then the cube paths $( x_1,\dots, x_n, x)\sim( y_1,\dots, y_n, x)$.
\end{lemma}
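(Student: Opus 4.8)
The plan is to induct on the dimension $n$ of the cube $x$. The key observation is that a sequence $(x_1, \dots, x_n, x)$ with $x_j = \delta_{k_j}^0 \cdots \delta_{k_n}^0 x$ is determined by the word $(k_1, \dots, k_n)$, and that any two such words can be transformed into one another by a finite number of adjacent transpositions — swapping two consecutive entries. So it suffices to prove the claim when $(k_1, \dots, k_n)$ and $(\ell_1, \dots, \ell_n)$ differ in exactly two consecutive positions, say positions $p$ and $p+1$, and otherwise agree. Then I would show that the two cube paths are in fact \emph{adjacent} in the sense defined just above Lemma~\ref{lem:hom_fullcube}, which immediately gives homotopy.

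More concretely, for the base case $n = 1$ there is nothing to prove since the word $(k_1)$ with $k_1 \le 1$ is forced to be $(1)$, so the two paths coincide. For the inductive step, first I would reduce to the two-consecutive-entries case: since entries satisfy $k_j \le j$, any permutation-type rearrangement of the ``face-collapsing schedule'' can be achieved by a sequence of swaps of adjacent positions, each intermediate word still satisfying the constraint $k_j \le j$; by transitivity of $\sim$ it is enough to handle a single such swap. Having fixed such a swap at positions $p, p+1$, the two paths agree on $x_1, \dots, x_{p-1}$ and on $x_{p+1}, \dots, x_n, x$, and differ only at index $p$. I would then compute $x_{p-1}$, $x_p$, $x_{p+1}$ and the corresponding $y_p$ explicitly in terms of iterated lower face maps of $x$, using the precubical identity~\eqref{eq:pcub} to normalize the index orderings, and check that the resulting local configuration matches one of the four ``flavors'' of adjacency — here it will be the lower-face flavor $x_{p-1} = \delta_k^0 x_p$, $x_p = \delta_\ell^0 x_{p+1}$, $y_{p-1} = \delta_{\ell-1}^0 y_p$, $y_p = \delta_k^0 y_{p+1}$ for suitable $k < \ell$ (or the variant where the dimensions differ by two, if the swap brings two formerly separated indices together). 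The precubical identity is exactly what converts $\delta_k^0 \delta_\ell^0$ into $\delta_{\ell-1}^0 \delta_k^0$, which is the algebraic content of that adjacency clause.

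The main obstacle I anticipate is bookkeeping: one must be careful about how the running index shifts as successive lower faces are applied (each $\delta^0$ decrements the dimension and can renumber later face-map subscripts), so the relationship between the ``raw'' indices $k_j$ appearing in the word and the indices that actually appear in the expressions $x_{p-1} = \delta_?^0 x_p$ requires a short normalization argument via~\eqref{eq:pcub}. In particular, verifying that a single swap at positions $p, p+1$ genuinely produces an adjacency — rather than something one step removed from it — is where the precubical identity must be invoked with the correct off-by-one index bookkeeping ($\ell$ versus $\ell - 1$). Once that local check is done, the global statement follows formally by combining the reduction to single swaps with the transitivity of homotopy, so the structural part of the proof is light and the weight is entirely in the index calculus for one adjacency.
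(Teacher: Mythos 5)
There is a genuine gap in your reduction step. Your local analysis is essentially right: changing the word at two consecutive positions and invoking the precubical identity~\eqref{eq:pcub} does yield an adjacency of the first (lower-face) flavor, and only that flavor can occur here since both paths increase strictly in dimension, so the ``dimensions differ by two'' variant you mention never arises. But the global claim that any two admissible words $(k_1,\dots,k_n)$, $(\ell_1,\dots,\ell_n)$ with $k_j,\ell_j\le j$ are related by a sequence of \emph{swaps of consecutive entries}, with all intermediate words admissible, is false as stated: a swap preserves the multiset of entries, whereas two admissible words need not have the same multiset at all --- already for $n=2$ the two cube paths correspond to the words $(1,1)$ and $(1,2)$, which no permutation of entries can relate (and swapping $(1,2)$ produces the inadmissible $(2,1)$). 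The elementary move that actually corresponds to an adjacency is the twisted swap $(k,\ell)\mapsto(\ell-1,k)$ for $k<\ell$ --- exactly your ``$\ell$ versus $\ell-1$'' bookkeeping --- and the assertion that these moves connect \emph{all} admissible words is precisely the nontrivial combinatorial content of the lemma, which your proposal labels as ``light'' and never establishes.

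This is where the paper's proof does its work: it sets up a bijection between such cube paths and elements of the symmetric group $S_n$ (the word is essentially a Lehmer code), observes that under this bijection adjacencies become transpositions, and concludes because transpositions generate $S_n$. Your argument can be repaired either by quoting that bijection, or more elementarily by a monovariant: the move $(k,\ell)\mapsto(\ell-1,k)$ with $k<\ell$ keeps the word admissible and decreases the sum of its entries by one, and a word admitting no such move is non-increasing, hence (since $k_1\le 1$) equal to $(1,1,\dots,1)$; thus every admissible word is connected to $(1,\dots,1)$, and any two are connected to each other. With such an argument inserted, your local adjacency check (together with the trivial observation that if the modified cube happens to coincide with the old one the paths are equal) completes the proof; as written, the reduction is the missing step, not a formality, and the stated induction on $n$ is never actually used.
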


We extend concatenation and prefix to homotopy classes of cube paths
by defining $[ x_1,\dots, x_m]*[ y_1,\dots, y_p]=[ x_1,\dots, x_m,
y_1,\dots, y_p]$ and saying that $\tilde x\sqsubseteq \tilde z$, for
homotopy classes $\tilde x$, $\tilde z$ of cube paths, if there are $(
x_1,\dots, x_m)\in \tilde x$ and $( z_1,\dots, z_q)\in \tilde z$ for
which $( x_1,\dots, x_m)\sqsubseteq( z_1,\dots, z_q)$.  It is easy to
see that concatenation is well-defined, and that $\tilde x\sqsubseteq
\tilde z$ if and only if there is a homotopy class $\tilde y$ for which
$\tilde z= \tilde x* \tilde y$.

Using homotopy classes of cube paths, we can now define the
\emph{unfolding} of a HDA.  Unfoldings of HDA are similar to unfoldings
of transition systems~\cite{WinskelN95-Models} or Petri
nets~\cite{DBLP:journals/tcs/NielsenPW81,DBLP:conf/fsttcs/HaymanW08},
but also to \emph{universal covering spaces} in algebraic topology.  The
intention is that the unfolding of a HDA captures all its computations,
up to homotopy.

We say that a HDA $X$ is a \emph{higher-dimensional tree} if it holds
that for any $x\in X$, there is precisely one homotopy class of pointed
cube paths to $x$.  The full subcategory of $\HDA$ spanned by the
higher-dimensional trees is denoted $\HDT$.  Note that any
higher-dimensional path is a higher-dimensional tree; indeed there is an
inclusion $\HDP\hookrightarrow \HDT$.

\begin{definition}
  \label{defi:unfold}
  The \emph{unfolding} of a HDA $i: *\to X$ consists of a HDA $\tilde i:
  *\to \tilde X$ and a pointed \emph{projection} morphism $\pi_X: \tilde
  X\to X$, which are defined as follows:
  \begin{itemize}
  \item $\tilde X_n=\big\{[ x_1,\dots, x_m]\mid( x_1,\dots, x_m)$
    pointed cube path in $X, x_m\in X_n\big\}$; $\tilde i=[ i]$
  \item $\tilde \delta_k^0[ x_1,\dots, x_m]=\big\{ \sigma=( y_1,\dots,
    y_p)\mid y_p= \delta_k^0 x_m, \sigma* x_m\sim( x_1,\dots,x_m)\big\}$
  \item $\tilde \delta_k^1[ x_1,\dots, x_m]=[ x_1,\dots, x_m,
    \delta_k^1 x_m]$
  \item $\pi_X[ x_1,\dots, x_m]= x_m$
  \end{itemize}
\end{definition}

\begin{proposition}
  \label{thm:univ_cov}
  The unfolding $( \tilde X, \pi_X)$ of a HDA $X$ is well-defined, and
  $\tilde X$ is a higher-dimensional tree.  If $X$ itself is a
  higher-dimensional tree, then the projection $\pi_X: \tilde X\to X$ is
  an isomorphism.
\end{proposition}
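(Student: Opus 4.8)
The plan is to check in turn that (i)~the face maps $\tilde\delta^0_k,\tilde\delta^1_k$ of $\tilde X$ are well-defined, (ii)~$\tilde X$ satisfies the precubical identities and $\pi_X$ is a pointed morphism, (iii)~$\tilde X$ is a higher-dimensional tree, and (iv)~$\pi_X$ is an isomorphism when $X$ already is one. For (i), $\tilde\delta^1_k$ is harmless: appending $\delta^1_k x_m$ to a pointed cube path preserves adjacency and hence homotopy, so $\tilde\delta^1_k[x_1,\dots,x_m]$ is independent of the chosen representative and is visibly a pointed cube path to $\delta^1_k x_m\in X_{n-1}$. The substance is $\tilde\delta^0_k$: its defining set consists of pointed cube paths ending in $\delta^0_k x_m$ and is automatically closed under homotopy (homotopy preserves endpoints and concatenation respects homotopy), so what remains is to show this set is nonempty and that all its members are homotopic. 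Nonemptiness I would prove by induction on the path length: inspecting whether the final step $x_{m-1}\to x_m$ raises or lowers dimension, using the precubical identity to locate the face of $x_{m-1}$ lying below $\delta^0_k x_m$, applying the inductive hypothesis to $(x_1,\dots,x_{m-1})$ to get a pointed cube path $\rho$ to that face, and taking $\sigma=\rho*(\delta^0_k x_m)$; then $\sigma*x_m$ and $(x_1,\dots,x_m)$ differ in a single entry, and that difference is an adjacency of exactly one of the four types (types~$1$ or $2$ if $x_{m-1}\to x_m$ raised dimension, types~$3$ or $4$ if it lowered it), so $\sigma*x_m\sim(x_1,\dots,x_m)$.

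The remaining point of (i) is the cancellation property: if $\sigma*x_m\sim\sigma'*x_m$ with $\sigma,\sigma'$ both ending in $\delta^0_k x_m$, then $\sigma\sim\sigma'$. This I expect to be the main obstacle. One cannot simply restrict the given homotopy, because an adjacency acting at the penultimate position can replace $\delta^0_k x_m$ by a different face of $x_m$, so the homotopy may leave and re-enter the class of cube paths of the prescribed shape; establishing cancellation therefore calls for a more global analysis — essentially a normal-form or local-confluence argument for homotopies in the vicinity of the terminal cube — rather than a term-by-term manipulation, and this is where I expect essentially all the effort to go.

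Step (ii) is mechanical. For $\tilde\delta^\nu_k\tilde\delta^\mu_\ell=\tilde\delta^\mu_{\ell-1}\tilde\delta^\nu_k$ with $k<\ell$ one splits on $\nu,\mu\in\{0,1\}$; in each case, expanding the definitions and applying the precubical identity in $X$ exhibits the two sides as homotopy classes of pointed cube paths ending in the same $(n-2)$-cube and differing in one entry, and matching that difference against the adjacency list supplies the required homotopy (the case $\nu=\mu=0$ also uses the well-definedness of $\tilde\delta^0$ from~(i)). That $\pi_X$ commutes with the face maps and sends $\tilde i$ to $i$ is immediate from Definition~\ref{defi:unfold}.

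For (iii): every $\zeta=[x_1,\dots,x_m]\in\tilde X$ is reachable, since a direct check from Definition~\ref{defi:unfold} shows that $([x_1],[x_1,x_2],\dots,[x_1,\dots,x_m])$ is a pointed cube path in $\tilde X$ to $\zeta$; and for uniqueness of the homotopy class one shows, by induction along an arbitrary pointed cube path $(\xi_1,\dots,\xi_r)$ in $\tilde X$, that $\xi_j=[\pi_X\xi_1,\dots,\pi_X\xi_j]$ for all $j$, so that every pointed cube path in $\tilde X$ is the canonical lift of its $\pi_X$-image. Hence pointed cube paths in $\tilde X$ ending at $\zeta$ correspond bijectively to pointed cube paths in $X$ representing $\zeta$, and since $\pi_X$ sends adjacencies to adjacencies or identities while adjacencies in $X$ lift to adjacencies of the corresponding canonical lifts (again by the matching used in (i)--(ii)), this correspondence is homotopy-compatible in both directions; as all cube paths representing $\zeta$ lie in a single homotopy class, so do all pointed cube paths in $\tilde X$ ending at $\zeta$. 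Finally, for (iv), if $X$ is a higher-dimensional tree then every cube of $X$ is reachable and carries exactly one homotopy class of pointed cube paths, so $\pi_X\colon\tilde X_n\to X_n$, $[x_1,\dots,x_m]\mapsto x_m$, is a bijection in each degree; a pointed morphism of precubical sets that is degreewise bijective is an isomorphism in $\HDA$ (its degreewise inverse again commutes with the face maps), which completes the argument.
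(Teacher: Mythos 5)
Your decomposition follows the same lines as the paper's proof, and several of your components are essentially the arguments given there: the observation that every pointed cube path in $\tilde X$ is the canonical lift of its $\pi_X$-image is precisely the paper's Lemma~\ref{lem:cube_path_rep}, the transfer of a chain of adjacencies in $X$ to a chain of adjacencies between the canonical lifts is exactly how the paper proves that $\tilde X$ is a higher-dimensional tree, and your inverse to $\pi_X$ on trees is the paper's. The one methodological difference in the parts you actually argue is non-emptiness of $\tilde\delta_k^0[x_1,\dots,x_m]$: the paper obtains it from an auxiliary normal form (Lemma~\ref{lem:fan}, every pointed cube path is homotopic to a \emph{fan-shaped} one, proved by a weight-decreasing rewriting using the precubical identity, combined with Lemma~\ref{lem:hom_fullcube}), and reuses fan-shaped representatives for the mixed cases of the precubical identity and for $\pi_X\circ\tilde\delta_k^0=\delta_k^0\circ\pi_X$; your induction on path length, with the case split on whether the last step raises or lowers dimension and a single adjacency of type~1 resp.\ 3/4 at the penultimate position, is a sound and arguably more economical substitute for those uses.

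The proposal is incomplete, however, at exactly the point you flag and then defer: you never prove that $\tilde\delta_k^0[x_1,\dots,x_m]$ is a \emph{single} homotopy class, i.e.\ the cancellation property that $\sigma*x_m\sim\sigma'*x_m$ with $\sigma,\sigma'$ both ending in $\delta_k^0 x_m$ forces $\sigma\sim\sigma'$. Your diagnosis of why this is delicate is right: a single adjacency between $\sigma*x_m$ and $\sigma'*x_m$ restricts to an adjacency (or equality) of the prefixes, but a longer homotopy may pass through paths whose penultimate cube is a different lower face of $x_m$, or a higher-dimensional cube having $x_m$ as an upper face, so one cannot simply truncate the homotopy. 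Since everything that follows in your plan --- the precubical identities, the tree property, the isomorphism on trees --- presupposes that $\tilde\delta_k^0$ is a well-defined function into $\tilde X_{n-1}$, announcing that ``this is where the effort would go'' without supplying the argument leaves the core of the proposition unestablished; that is a genuine gap in the proposal as a proof. For comparison, the paper's own proof treats well-definedness of $\tilde\delta_k^0$ as consisting of independence of the chosen representative plus non-emptiness and is silent on the single-class question, so you have isolated a step the authors pass over; but a complete write-up would have to either supply a cancellation or confluence argument (for instance an adjacency-invariant assignment, to each cube path ending in $x_m$, of the homotopy class of prefixes ending in $\delta_k^0 x_m$), or explain why the construction can be reorganized so that such a statement is not needed.
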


\begin{lemma}
  \label{lem:cube_path_rep}
  If $X$ is a higher-dimensional automaton and $( \tilde x_1,\dots,
  \tilde x_m)$ is a pointed cube path in $\tilde X$, then $( \pi_X
  \tilde x_1,\dots, \pi_X \tilde x_j)\in \tilde x_j$ for all $j=
  1,\dots, m$.
\end{lemma}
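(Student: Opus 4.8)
The plan is to prove Lemma~\ref{lem:cube_path_rep} by induction on $m$, using the explicit description of the face maps of $\tilde X$ from Definition~\ref{defi:unfold} to track how the representative homotopy class evolves along the cube path. The claim is that for a pointed cube path $(\tilde x_1,\dots,\tilde x_m)$ in $\tilde X$, the prefix $(\pi_X\tilde x_1,\dots,\pi_X\tilde x_j)$ is itself a pointed cube path in $X$ lying in the homotopy class $\tilde x_j$, for every $j$.

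First I would handle the base case $j=1$: since the cube path is pointed, $\tilde x_1=\tilde i=[i]$, and $\pi_X\tilde x_1=i$, so the length-one sequence $(i)$ is trivially the representative of $[i]$. For the inductive step, suppose $(\pi_X\tilde x_1,\dots,\pi_X\tilde x_j)\in\tilde x_j$ and consider the transition from $\tilde x_j$ to $\tilde x_{j+1}$. By definition of a cube path in $\tilde X$, there is some $k$ with either $\tilde x_j=\tilde\delta_k^0\tilde x_{j+1}$ or $\tilde x_{j+1}=\tilde\delta_k^1\tilde x_j$. In the second (upper face) case, Definition~\ref{defi:unfold} gives directly $\tilde x_{j+1}=[x_1,\dots,x_m,\delta_k^1 x_m]$ where $(x_1,\dots,x_m)$ is any representative of $\tilde x_j$; taking the representative supplied by the induction hypothesis, we get that $(\pi_X\tilde x_1,\dots,\pi_X\tilde x_j,\delta_k^1\pi_X\tilde x_j)$ represents $\tilde x_{j+1}$, and $\pi_X\tilde x_{j+1}=\delta_k^1\pi_X\tilde x_j$ by the formula for $\pi_X$, so this is exactly $(\pi_X\tilde x_1,\dots,\pi_X\tilde x_{j+1})$. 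In the first (lower face) case, $\tilde x_j=\tilde\delta_k^0\tilde x_{j+1}$ means, writing $\tilde x_{j+1}=[z_1,\dots,z_q]$ with $z_q=\pi_X\tilde x_{j+1}$, that $\tilde x_j$ is the set of cube paths $\sigma$ ending in $\delta_k^0 z_q$ with $\sigma*z_q\sim(z_1,\dots,z_q)$; in particular $(\pi_X\tilde x_1,\dots,\pi_X\tilde x_j)$ (the induction-hypothesis representative of $\tilde x_j$) is such a $\sigma$, so it ends in $\delta_k^0 z_q=\delta_k^0\pi_X\tilde x_{j+1}$ and $(\pi_X\tilde x_1,\dots,\pi_X\tilde x_j,\pi_X\tilde x_{j+1})\sim(z_1,\dots,z_q)\in\tilde x_{j+1}$. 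In both cases the extended sequence is a genuine cube path in $X$ (the face relation linking consecutive entries is exactly the one appearing in the definition) and represents $\tilde x_{j+1}$, completing the induction.

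The main obstacle I anticipate is bookkeeping in the lower-face case: one must be careful that the representative furnished by the induction hypothesis is literally the same sequence $\sigma$ that witnesses $\tilde x_j=\tilde\delta_k^0\tilde x_{j+1}$, rather than merely homotopic to it — but this is automatic since $\tilde\delta_k^0\tilde x_{j+1}$ is defined as the \emph{entire} homotopy class of such $\sigma$, and $\tilde x_j$ equals that class. A secondary point worth spelling out is that consecutive entries $\pi_X\tilde x_j,\pi_X\tilde x_{j+1}$ satisfy the defining condition of a cube path (one is a $\delta_k^0$-face of the other, or conversely a $\delta_k^1$-face), which again falls straight out of the two cases above. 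No appeal to well-definedness beyond Proposition~\ref{thm:univ_cov} is needed, and the argument is purely formal manipulation of the definitions.
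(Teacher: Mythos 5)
Your proposal is correct and takes essentially the same route as the paper: an induction on $j$ with base case $\tilde x_1=\tilde i=[i]$, splitting the inductive step into the cases $\tilde x_{j+1}=\tilde\delta_k^1\tilde x_j$ and $\tilde x_j=\tilde\delta_k^0\tilde x_{j+1}$ and reading the claim off the definitions of $\tilde\delta_k^1$, $\tilde\delta_k^0$ and $\pi_X$. Your careful treatment of the lower-face case merely spells out what the paper compresses into ``by definition of $\tilde\delta_k^0$''.
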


\begin{lemma}
  \label{le:unf_unique_lift}
  For any HDA $X$ there is a unique lift $r$ in any commutative diagram
  as below, for morphisms $g: P\to Q\in \HDP$, $p: P\to \tilde X, q:
  Q\to X\in \HDA$:
  \begin{equation*}
    \xymatrix{%
      P \ar[r]^p \ar[d]_g & \tilde X \ar[d]^{ \pi_X} \\ Q \ar[r]_q
      \ar@{.>}[ur]|r & X
    }
  \end{equation*}
\end{lemma}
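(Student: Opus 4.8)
The plan is to construct $r$ by hand from the cube‑path data and then show it is forced. Write the higher‑dimensional path $Q$ via its representing sequence $(q_1,\dots,q_l)$. Since the morphisms of $\HDP$ are generated by isomorphisms and pointed cube path extensions, and isomorphisms are unproblematic, we may assume that $g$ is the cube path extension given by a prefix, \ie~that $P$ is represented by $(q_1,\dots,q_m)$ with $m\le l$ and $g(q_j)=q_j$. Because $q\colon Q\to X$ is a pointed morphism of $\HDA$, the initial segments $\rho_j:=(q(q_1),\dots,q(q_j))$ are pointed cube paths in $X$ for every $j$, each ending in $q(q_j)$, so $[\rho_j]\in\tilde X$. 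The candidate lift is the morphism $r\colon Q\to\tilde X$ determined by $r(q_j)=[\rho_j]$ on the representing cubes: as a precubical path object is non‑selflinked, every cube of $Q$ is a face of a representing cube through an essentially unique index sequence, so this assignment extends uniquely to a graded map on all of $Q$, and---granting the face compatibilities below on consecutive representing cubes---the precubical identities in $Q$ and in $\tilde X$ propagate it to a precubical morphism.

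For the face compatibilities, recall that consecutive representing cubes satisfy $q_j=\delta_k^0 q_{j+1}$ or $q_{j+1}=\delta_k^1 q_j$. In the second case $q(q_{j+1})=\delta_k^1 q(q_j)$, hence $[\rho_{j+1}]=[q(q_1),\dots,q(q_j),\delta_k^1 q(q_j)]=\tilde\delta_k^1[\rho_j]$ straight from Definition~\ref{defi:unfold}. In the first case $q(q_j)=\delta_k^0 q(q_{j+1})$ and $\rho_j* q(q_{j+1})=\rho_{j+1}$, so $\rho_j$ lies in the homotopy class $\tilde\delta_k^0[\rho_{j+1}]$, \ie~$r(q_j)=\tilde\delta_k^0 r(q_{j+1})$. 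Thus $r$ is a morphism. It is pointed because $r(q_1)=[q(q_1)]=[i_X]=\tilde i$; it satisfies $\pi_X r=q$ because $\pi_X[\rho_j]=q(q_j)$ by definition of $\pi_X$; and it satisfies $r g=p$ because, by Lemma~\ref{lem:cube_path_rep} applied to the pointed cube path $(p(q_1),\dots,p(q_m))$ in $\tilde X$ together with commutativity of the given square ($\pi_X p=q g$), one gets $p(q_j)=[\pi_X p(q_1),\dots,\pi_X p(q_j)]=[\rho_j]=r(q_j)$ for all $j\le m$.

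Uniqueness follows by showing that any lift $r'$ agrees with $r$ on $q_1,\dots,q_l$, by induction along the path; as a morphism out of a path object is determined by its values on representing cubes, this gives $r'=r$. For $j\le m$ we get $r'(q_j)=p(q_j)=r(q_j)$ from $r'g=p$. For $j>m$: if $q_j=\delta_k^1 q_{j-1}$ then $r'(q_j)=\tilde\delta_k^1 r'(q_{j-1})=\tilde\delta_k^1 r(q_{j-1})=r(q_j)$; and if $q_{j-1}=\delta_k^0 q_j$ then $\tilde\delta_k^0 r'(q_j)=r'(q_{j-1})=r(q_{j-1})=\tilde\delta_k^0 r(q_j)$, while $\pi_X r'(q_j)=q(q_j)=\pi_X r(q_j)$, so both $r'(q_j)$ and $r(q_j)$ are classes of pointed cube paths ending in $q(q_j)$; unwinding the ``prefix'' description of $\tilde\delta_k^0$ in Definition~\ref{defi:unfold}, the equality $\tilde\delta_k^0 r'(q_j)=\tilde\delta_k^0 r(q_j)$ says that representatives of $r'(q_j)$ and $r(q_j)$ become homotopic after appending $q(q_j)$, hence are themselves homotopic, so $r'(q_j)=r(q_j)$.

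The main obstacle is the bit of homotopy bookkeeping in the two $\tilde\delta_k^0$ arguments above: one must work with homotopy classes rather than individual cube paths, and check that the prefix description of the lower face in Definition~\ref{defi:unfold} really makes $\rho_j$ a representative of $\tilde\delta_k^0[\rho_{j+1}]$ and, conversely, lets one cancel the appended cube in the uniqueness step---both facts being underwritten by the well‑definedness of the unfolding (Proposition~\ref{thm:univ_cov}). The remainder---absorbing isomorphisms of $\HDP$, extending $r$ off the representing cubes, and verifying the precubical identities for $r$---is routine once one uses that precubical path objects are non‑selflinked.
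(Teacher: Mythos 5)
Your proposal is correct and follows essentially the same route as the paper: the lift is built by sending each cube of the representing sequence of $Q$ to the homotopy class of the corresponding initial segment of the projected cube path, with compatibility checked against the face maps $\tilde\delta_k^0$, $\tilde\delta_k^1$ of Definition~\ref{defi:unfold} and with Lemma~\ref{lem:cube_path_rep} supplying the identification with $p$ on $P$. The paper's proof is just a terser version of this (it phrases $p$ directly as a pointed cube path in $\tilde X$ and gets uniqueness at once from $\tilde X$ being a higher-dimensional tree, where you run an equivalent induction along the path), so the two arguments coincide in substance.
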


\begin{corollary}
  \label{cor:proj_open}
  Projections are open, and any HDA is bisimilar to its
  unfolding. \qed
\end{corollary}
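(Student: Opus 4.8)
The plan is to derive Corollary~\ref{cor:proj_open} from the preceding lemmas with essentially no new work. There are two assertions: first, that every projection $\pi_X\colon\tilde X\to X$ is an open map, and second, that every HDA $X$ is bisimilar to its unfolding $\tilde X$. For the first, recall that $f$ is open (Definition~\ref{defi:open_hda}) precisely when it has the right lifting property with respect to all morphisms $g\colon P\to Q$ in $\HDP$. Lemma~\ref{le:unf_unique_lift} states that for any such $g$, and any $p\colon P\to\tilde X$, $q\colon Q\to X$ making the relevant square commute, there is a (unique) lift $r\colon Q\to\tilde X$. That is verbatim the right lifting property required; existence of the lift is all that openness demands, so $\pi_X$ is open immediately. (The uniqueness in Lemma~\ref{le:unf_unique_lift} is a stronger property that we simply do not need here.)

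For the second assertion, I would produce the span of open maps demanded by the definition of bisimilarity. The natural choice is to take $Z=\tilde X$ and the span
\begin{equation*}
  X \xleftarrow{\;\pi_X\;} \tilde X \xrightarrow{\;\id\;} \tilde X\,,
\end{equation*}
or equally well $X\xleftarrow{\pi_X}\tilde X\xrightarrow{\pi_X}\tilde X$ is not quite right --- one leg must land in $X$ and the other in $\tilde X$; the clean version is $X\xleftarrow{\pi_X}\tilde X\xrightarrow{\id_{\tilde X}}\tilde X$. The left leg is open by the first part of the corollary. The right leg is the identity on $\tilde X$, which is trivially open: any lifting problem against it is solved by the bottom map itself. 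Both are pointed morphisms, since $\pi_X$ is pointed by construction in Definition~\ref{defi:unfold} and the identity obviously is. Hence we have a span of open maps in $\HDA$ between $X$ and $\tilde X$, which by Definition~\ref{defi:open_hda} means $X$ and $\tilde X$ are bisimilar.

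There is essentially no obstacle: the corollary is a bookkeeping consequence, and the only thing to be careful about is matching the precise form of the lifting diagrams. One should check that the commuting square appearing in Lemma~\ref{le:unf_unique_lift} is literally the one appearing in Definition~\ref{defi:open_hda} with $f=\pi_X$ and $Y=X$ --- which it is, since both diagrams have $g\colon P\to Q$ on the left, the map being tested on the right, a map $P\to\tilde X$ along the top and a map $Q\to X$ along the bottom. The only genuinely substantive input is that identities are open, which follows directly from Definition~\ref{defi:open_hda} (take the lift $r=q$; then $r\circ g = q\circ g = p$ since $\id\circ p = q\circ g$ forces $p=q\circ g$, and $\id\circ r = r = q$ trivially). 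This is why the statement is flagged with \qed in the excerpt and expected to have a one-line proof.
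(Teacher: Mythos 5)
Your proof is correct and follows exactly the route the paper intends (the corollary carries a \qed because it is immediate): Lemma~\ref{le:unf_unique_lift} gives precisely the right lifting property of Definition~\ref{defi:open_hda} for $\pi_X$, and the span $X\xleftarrow{\pi_X}\tilde X\xrightarrow{\id}\tilde X$ of open (pointed) maps yields bisimilarity of $X$ and $\tilde X$.
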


\section{History-Preserving Bisimilarity}
\label{sec:open-hdah}

In this section we recall history-preserving bisimilarity for HDA
from~\cite{DBLP:journals/tcs/Glabbeek06} and show the main result of
this paper: that hp-bisimilarity and the bisimilarity of
Def.~\ref{defi:open_hda} are the same.  To do this, we first need to
introduce \emph{morphisms of homotopy classes of paths} and
\emph{homotopy bisimilarity}.

\begin{definition}
  \label{de:hdah}
  The category of \emph{higher-dimensional automata up to homotopy}
  $\HDAh$ has as objects HDA and as morphisms pointed precubical
  morphisms $f:\tilde X\to \tilde Y$ of unfoldings.
\end{definition}

Hence any morphism $X\to Y$ in $\HDA$ gives, by the unfolding functor,
rise to a morphism $X\to Y$ in $\HDAh$.  The simple example in
Fig.~\ref{fig:hda_vs_hdah} shows that the converse is not the case.  By
restriction to higher-dimensional trees, we get a full subcategory
$\HDTh\hookrightarrow \HDAh$.

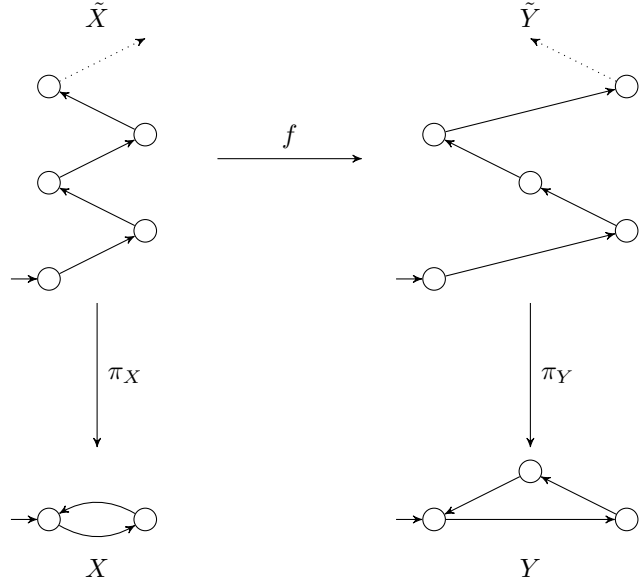
\begin{figure}
  \centering
  \begin{tikzpicture}[->,>=stealth',auto,scale=.64]
    \tikzstyle{every node}=[font=\footnotesize,initial text=]
    \tikzstyle{every state}=[fill=white,shape=circle,inner
    sep=.5mm,minimum size=3mm]
    \begin{scope}
      \node[state,initial] (x) at (0,0) {};
      \node[state] (y) at (2,0) {};
      \path[out=-30,in=-150] (x) edge (y);
      \path[out=-210,in=30] (y) edge (x);
      \node at (1,-1) {$X$};
      \path (1,4.5) edge node {$\pi_X$} (1,1.5);
    \end{scope}
    \begin{scope}[yshift=5cm]
      \node[state,initial] (x0) at (0,0) {};
      \node[state] (y0) at (2,1) {};
      \node[state] (x1) at (0,2) {};
      \node[state] (y1) at (2,3) {};
      \node[state] (x2) at (0,4) {};
      \path (x0) edge (y0);
      \path (y0) edge (x1);
      \path (x1) edge (y1);
      \path (y1) edge (x2);
      \path[dotted] (x2) edge (2,5);
      \node at (1,5.5) {$\tilde X$};
      \path (3.5,2.5) edge node {$f$} (6.5,2.5);
    \end{scope}
    \begin{scope}[xshift=8cm]
      \node[state,initial] (x) at (0,0) {};
      \node[state] (y) at (4,0) {};
      \node[state] (z) at (2,1) {};
      \path (x) edge (y);
      \path (y) edge (z);
      \path (z) edge (x);
      \node at (2,-1) {$Y$};
      \path (2,4.5) edge node {$\pi_Y$} (2,1.5);
    \end{scope}
    \begin{scope}[xshift=8cm,yshift=5cm]
      \node[state,initial] (x0) at (0,0) {};
      \node[state] (y0) at (4,1) {};
      \node[state] (z0) at (2,2) {};
      \node[state] (x1) at (0,3) {};
      \node[state] (y1) at (4,4) {};
      \path (x0) edge (y0);
      \path (y0) edge (z0);
      \path (z0) edge (x1);
      \path (x1) edge (y1);
      \path[dotted] (y1) edge (2,5);
      \node at (2,5.5) {$\tilde Y$};
    \end{scope}
  \end{tikzpicture}
  \caption{%
    \label{fig:hda_vs_hdah}
    Two simple one-dimensional HDA as objects of $\HDA$ and $\HDAh$.  In
    $\HDA$ there is no morphism $X\to Y$, in $\HDAh$ there is precisely
    one morphism $f: X\to Y$.
  }
\end{figure}

\begin{lemma}
  \label{lem:hdt_vs_hdth}
  The natural projection isomorphisms $\pi_X: \tilde X\to X$ for $X\in
  \HDT$ extend to an isomorphism of categories $\HDTh\cong \HDT$.
\end{lemma}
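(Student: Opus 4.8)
The plan is to exhibit an isomorphism of categories $\HDTh\cong\HDT$ which is the identity on objects — both categories have the higher-dimensional trees as objects — and which on morphisms is given by conjugation with the projections $\pi_X$. The only substantial input is Proposition~\ref{thm:univ_cov}: for $X\in\HDT$ the projection $\pi_X:\tilde X\to X$ is an isomorphism in $\HDA$, and since it is pointed, so is its inverse $\pi_X^{-1}$.

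First I would define a functor $G:\HDTh\to\HDT$. On objects $G$ is the identity. A morphism $f:X\to Y$ in $\HDTh$ is, by Definition~\ref{de:hdah}, a pointed precubical morphism $f:\tilde X\to\tilde Y$; set $G(f)=\pi_Y\circ f\circ\pi_X^{-1}$. This is a composite of pointed precubical morphisms, hence a pointed morphism $X\to Y$, and since $\HDT$ is a full subcategory of $\HDA$ it is a morphism of $\HDT$. Functoriality is a two-line check: $G(\id_X)=G(\id_{\tilde X})=\pi_X\circ\id_{\tilde X}\circ\pi_X^{-1}=\id_X$, and for composable $f:X\to Y$, $g:Y\to Z$ in $\HDTh$ (so that $g\circ f:\tilde X\to\tilde Z$),
\[
  G(g\circ f)=\pi_Z\circ g\circ f\circ\pi_X^{-1}
  =(\pi_Z\circ g\circ\pi_Y^{-1})\circ(\pi_Y\circ f\circ\pi_X^{-1})=G(g)\circ G(f),
\]
inserting $\pi_Y^{-1}\circ\pi_Y=\id_{\tilde Y}$.

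Next I would define the candidate inverse $\Phi:\HDT\to\HDTh$, again the identity on objects, sending $g:X\to Y$ in $\HDT$ to $\Phi(g)=\pi_Y^{-1}\circ g\circ\pi_X:\tilde X\to\tilde Y$. One should observe here that $\Phi(g)$ coincides with the image $\tilde g$ of $g$ under the unfolding functor: this is precisely naturality of the projections, $\pi_Y\circ\tilde g=g\circ\pi_X$, so that $\Phi$ really is the restriction of the unfolding functor and the projection isomorphisms ``extend'' to $\Phi$ as claimed. Functoriality of $\Phi$ is checked just as for $G$. Finally $G\circ\Phi=\id_{\HDT}$ and $\Phi\circ G=\id_{\HDTh}$ are immediate from $\pi_X\circ\pi_X^{-1}=\id_X$ and $\pi_X^{-1}\circ\pi_X=\id_{\tilde X}$, so $G$ (equivalently $\Phi$) is an isomorphism of categories.

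There is no real obstacle here beyond bookkeeping; the points to be careful about are (i) keeping straight that a $\HDTh$-morphism $X\to Y$ is a morphism $\tilde X\to\tilde Y$, not a morphism of the underlying trees, so that $G$ and $\Phi$ act on different ambient objects, and (ii) invoking fullness of $\HDT\hookrightarrow\HDA$ and $\HDTh\hookrightarrow\HDAh$, so that the conjugated morphisms automatically lie in the subcategories. Everything rests on the single fact that each $\pi_X$ is an isomorphism, which is Proposition~\ref{thm:univ_cov}.
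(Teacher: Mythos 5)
Your proposal is correct and is essentially the paper's own argument: the paper's proof consists precisely of the one-line observation that a morphism $f\colon X\to Y$ in $\HDTh$ is ``pulled down'' to $\pi_Y\circ f\circ\pi_X^{-1}\colon X\to Y$ in $\HDT$, which is your functor $G$; your extra checks (functoriality, the inverse being the restricted unfolding functor) are just the bookkeeping the paper leaves implicit.
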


Restricting the above isomorphism to the subcategory $\HDP$ of $\HDT$
allows us to identify a subcategory $\HDPh$ of $\HDTh$ isomorphic to
$\HDP$.

\begin{definition}
  A pointed morphism $f: X\to Y$ in $\HDAh$ is \emph{open} if it has
  the right lifting property with respect to $\HDPh$, \ie~if it is the
  case that there is a lift $r$ in any commutative diagram as below,
  for all morphism $g: P\to Q\in \HDPh$, $p: P\to X, q: Q\to Y\in
  \HDAh$:
  \begin{equation*}
    \xymatrix{%
      P \ar[r]^p \ar[d]_g & X \ar[d]^f \\ Q \ar[r]_q \ar@{.>}[ur]|r &
      Y
    }
  \end{equation*}
  HDA $X$, $Y$ are \emph{homotopy bisimilar} if there is $Z\in \HDAh$
  and a span of open maps $X\from Z\to Y$ in $\HDAh$.
\end{definition}

The connections between open maps in $\HDAh$ and open maps in $\HDA$ are
as follows.

\begin{lemma}
  \label{lem:openh=open}
  \label{le:liftopen}
  A morphism $f: X\to Y$ in $\HDAh$ is open if and only if $f: \tilde
  X\to \tilde Y$ is open as a morphism of $\HDA$.  If $g: X\to Y$ is
  open in $\HDA$, then so is $\tilde g: \tilde X\to \tilde Y$.
\end{lemma}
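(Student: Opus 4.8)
My plan is to reduce both statements to transporting lifting problems across the projection isomorphisms $\pi_P: \tilde P\to P$, which are indeed isomorphisms for every higher-dimensional \emph{path} $P$ by Proposition~\ref{thm:univ_cov}; the only other ingredients needed are functoriality of unfolding and the fact that $\pi$ is natural, i.e.\ $\pi_Y\circ\tilde g= g\circ\pi_X$ for every $g: X\to Y$ in $\HDA$ (this is how the unfolding functor acts).

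\emph{The equivalence.} By Lemma~\ref{lem:hdt_vs_hdth} and the identification of $\HDPh$ with $\HDP$ following it, $\HDPh$ has the higher-dimensional paths as objects, and its morphisms $P\to Q$ are precisely the precubical morphisms $\tilde P\to\tilde Q$ of the form $\pi_Q^{-1}\circ h\circ\pi_P$ with $h: P\to Q$ in $\HDP$. Consequently a lifting square in $\HDAh$ for $f$ against $g= \pi_Q^{-1} h\pi_P$, with legs $p: \tilde P\to\tilde X$ and $q: \tilde Q\to\tilde Y$, is — after composing the relevant edges with the isomorphisms $\pi_P,\pi_Q$ — exactly a lifting square in $\HDA$ for $f: \tilde X\to\tilde Y$ against $h\in\HDP$, with legs $p\circ\pi_P^{-1}$ and $q\circ\pi_Q^{-1}$. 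This is a bijection between the two classes of lifting problems, and since $\pi_P$ and $\pi_Q$ are isomorphisms it carries solutions to solutions in both directions (composing a lift of one square with the appropriate projection isomorphism gives a lift of the other). Hence $f$ has the right lifting property against $\HDPh$ in $\HDAh$ if and only if $f: \tilde X\to\tilde Y$ has it against $\HDP$ in $\HDA$, which is the first claim.

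\emph{Open in $\HDA$ implies open after unfolding.} Let $g: X\to Y$ be open in $\HDA$, and consider a lifting problem for $\tilde g: \tilde X\to\tilde Y$ against some $j: P\to Q$ in $\HDP$, with legs $p: P\to\tilde X$, $q: Q\to\tilde Y$ and $\tilde g p= qj$. First push it down along the projections: naturality gives $g\,\pi_X p= \pi_Y\tilde g p= \pi_Y qj$, so $\pi_X p$ and $\pi_Y q$ form a lifting problem for $g$ against $j$, and openness of $g$ supplies $s: Q\to X$ with $gs= \pi_Y q$ and $sj= \pi_X p$. Since $\pi_X p= sj$, Lemma~\ref{le:unf_unique_lift} (applied to $X$) yields a unique $r: Q\to\tilde X$ with $\pi_X r= s$ and $rj= p$; the second equation is already half of what we want. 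For the other half, $\pi_Y\tilde g r= g\pi_X r= gs= \pi_Y q$ and $\tilde g r\circ j= \tilde g p= qj$, so $\tilde g r$ and $q$ are both lifts of the square formed by $qj: P\to\tilde Y$, $\pi_Y q: Q\to Y$, $j$ and $\pi_Y$; the uniqueness clause of Lemma~\ref{le:unf_unique_lift} (now applied to $Y$) forces $\tilde g r= q$. Thus $r$ solves the original problem and $\tilde g$ is open in $\HDA$.

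The diagram chases verifying commutativity of the auxiliary squares are routine, all resting on functoriality of unfolding and naturality of $\pi$. The one load-bearing point — and the step I expect to be the crux — is the \emph{uniqueness} of lifts through projections in Lemma~\ref{le:unf_unique_lift}: it is exactly what upgrades "$\tilde g r$ and $q$ agree after applying $\pi_Y$ and after precomposing with $j$" to "$\tilde g r= q$" in the last step; without it the argument would stall.
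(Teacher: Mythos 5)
Your proof is correct and takes essentially the same route as the paper: the equivalence is obtained by transporting lifting problems across the projection isomorphisms $\pi_P$, $\pi_Q$ (using the identification of $\HDPh$ with $\HDP$), and the second claim by pushing the square down along $\pi_X$, $\pi_Y$, using openness of $g$, and lifting back via Lemma~\ref{le:unf_unique_lift}. Your explicit check that $\tilde g r= q$ via the uniqueness clause of Lemma~\ref{le:unf_unique_lift} is a detail the paper leaves implicit, but it is the same argument.
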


We also need a lemma on prefixes in unfoldings.

\begin{lemma}
  \label{lem:reach-vs-prefix}
  Let $X$ be a HDA and $\tilde x, \tilde z\in \tilde X$.  Then there is
  a cube path from $\tilde x$ to $\tilde z$ in $\tilde X$ if and only if
  $\tilde x\sqsubseteq \tilde z$.
\end{lemma}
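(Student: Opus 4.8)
The plan is to prove both implications by comparing consecutive elements of a cube path in $\tilde X$ with one-step prefix extensions of homotopy classes, reading off the face maps $\tilde\delta_k^\nu$ from Definition~\ref{defi:unfold}.

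\emph{From prefix to cube path.} Assume $\tilde x\sqsubseteq\tilde z$. By the description of $\sqsubseteq$ on homotopy classes, choose a representative $(z_1,\dots,z_q)$ of $\tilde z$ and an index $m\le q$ such that $(z_1,\dots,z_m)$ represents $\tilde x$ (every prefix of a cube path is a cube path, and $z_1=i$ since $\tilde z\in\tilde X$). Put $\tilde w_j=[z_1,\dots,z_j]$ for $j=m,\dots,q$; then $\tilde w_m=\tilde x$, $\tilde w_q=\tilde z$, and each $\tilde w_j$ lies in $\tilde X$. It remains to check that $(\tilde w_m,\dots,\tilde w_q)$ is a cube path in $\tilde X$. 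For each $j$ the pair $(z_j,z_{j+1})$ satisfies one of the two cube-path conditions. If $z_{j+1}=\delta_k^1 z_j$, then $\tilde\delta_k^1\tilde w_j=[z_1,\dots,z_j,\delta_k^1 z_j]=\tilde w_{j+1}$, an ``end'' step. If $z_j=\delta_k^0 z_{j+1}$, then $(z_1,\dots,z_j)$ has last cube $\delta_k^0 z_{j+1}$ and $(z_1,\dots,z_j)*z_{j+1}=(z_1,\dots,z_{j+1})$, so $(z_1,\dots,z_j)$ belongs to the set defining $\tilde\delta_k^0[z_1,\dots,z_{j+1}]$; since that set is precisely the homotopy class of any of its members, $\tilde\delta_k^0\tilde w_{j+1}=\tilde w_j$, a ``start'' step. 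Hence $(\tilde w_m,\dots,\tilde w_q)$ is a cube path from $\tilde x$ to $\tilde z$.

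\emph{From cube path to prefix.} Let $(\tilde w_1,\dots,\tilde w_r)$ be a cube path in $\tilde X$ with $\tilde w_1=\tilde x$ and $\tilde w_r=\tilde z$. The relation $\sqsubseteq$ on homotopy classes is reflexive and transitive: it holds iff $\tilde z=\tilde x*\tilde y$ for some homotopy class $\tilde y$, so extensions compose. Thus it suffices to show $\tilde w_j\sqsubseteq\tilde w_{j+1}$ for every $j$. If $\tilde w_{j+1}=\tilde\delta_k^1\tilde w_j$ and $(u_1,\dots,u_s)$ represents $\tilde w_j$, then $(u_1,\dots,u_s,\delta_k^1 u_s)$ represents $\tilde w_{j+1}$ and extends $(u_1,\dots,u_s)$, so $\tilde w_j\sqsubseteq\tilde w_{j+1}$. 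If $\tilde w_j=\tilde\delta_k^0\tilde w_{j+1}$, fix a representative $(v_1,\dots,v_t)$ of $\tilde w_{j+1}$ and any representative $(y_1,\dots,y_p)$ of $\tilde w_j$; by definition of $\tilde\delta_k^0$ one has $y_p=\delta_k^0 v_t$ and $(y_1,\dots,y_p)*v_t\sim(v_1,\dots,v_t)$, so $(y_1,\dots,y_p,v_t)$ represents $\tilde w_{j+1}$ and extends $(y_1,\dots,y_p)$, again giving $\tilde w_j\sqsubseteq\tilde w_{j+1}$. Chaining along $j$ yields $\tilde x=\tilde w_1\sqsubseteq\tilde w_r=\tilde z$.

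The only place that needs care is the pair of ``start'' steps, where one must unfold the set-theoretic definition of the lower face map $\tilde\delta_k^0$ of the unfolding and verify the homotopy $(y_1,\dots,y_p)*v_t\sim(v_1,\dots,v_t)$; Lemma~\ref{lem:cube_path_rep} together with the well-definedness asserted in Proposition~\ref{thm:univ_cov} makes the bookkeeping of representatives routine. The ``end'' steps, which use only $\tilde\delta_k^1$, are immediate, since that face map simply appends a cube.
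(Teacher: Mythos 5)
Your proof is correct. Your ``prefix to cube path'' direction is essentially the paper's argument: pick a representative of $\tilde z$ extending one of $\tilde x$, form the classes $[z_1,\dots,z_j]$ of its successive prefixes, and observe these form a cube path in $\tilde X$ (you spell out the verification against the face maps of Definition~\ref{defi:unfold}, which the paper leaves implicit). The other direction is where you diverge: the paper projects the cube path $(\tilde x,\tilde y_1,\dots,\tilde y_p)$ down to $X$ via $\pi_X$ and invokes Lemma~\ref{lem:cube_path_rep} to conclude in one stroke that $(x_1,\dots,x_m,y_1,\dots,y_p)\in\tilde z$ for a representative $(x_1,\dots,x_m)\in\tilde x$, which directly exhibits a representative of $\tilde z$ extending a representative of $\tilde x$. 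You instead stay inside $\tilde X$, unfold the set-theoretic definitions of $\tilde\delta_k^0$ and $\tilde\delta_k^1$ to get a one-step prefix extension $\tilde w_j\sqsubseteq\tilde w_{j+1}$ at each step, and chain via transitivity of $\sqsubseteq$ on classes (using the characterization $\tilde x\sqsubseteq\tilde z\iff\tilde z=\tilde x*\tilde y$, since the witnessing representatives at consecutive steps need not match). The paper's route is shorter and reuses Lemma~\ref{lem:cube_path_rep}; yours is more self-contained, at the cost of re-deriving, step by step, what that lemma packages, and of leaning on the well-definedness of $\tilde\delta_k^0$ from Proposition~\ref{thm:univ_cov} (that the defining set is exactly one homotopy class) — a dependence the paper's version shares implicitly. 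The only shared loose end is the degenerate case $\tilde x=\tilde z$ with the one-element cube path, where reflexivity of $\sqsubseteq$ under the paper's literal definition of prefix is moot; the paper glosses this identically, so it is not a gap specific to your argument.
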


\begin{proposition}
  \label{thm:hombis}
  For HDA $i: *\to X$, $j: * \to Y$, the following are equivalent:
  \begin{enumerate}
  \item\label{enu:hombis.box} $X$ and $Y$ are homotopy bisimilar;
  \item\label{enu:hombis.onestep} there exists a precubical subset
    $R\subseteq \tilde X\times \tilde Y$ with $( \tilde i, \tilde
    j)\in R$, and such that for all $( \tilde x_1, \tilde y_1)\in R$,
    \begin{itemize}
    \item for any $\tilde x_2\in \tilde X$ for which $\tilde x_1=
      \delta_k^0 \tilde x_2$ for some $k$, there exists $\tilde y_2\in
      \tilde Y$ for which $\tilde y_1= \delta_k^0 \tilde y_2$ and $(
      \tilde x_2, \tilde y_2)\in R$,
    \item for any $\tilde y_2\in \tilde Y$ for which $\tilde y_1=
      \delta_k^0 \tilde y_2$ for some $k$, there exists $\tilde x_2\in
      \tilde X$ for which $\tilde x_1= \delta_k^0 \tilde x_2$ and $(
      \tilde x_2, \tilde y_2)\in R$;
    \end{itemize}
  \item\label{enu:hombis.cubepath} there exists a precubical subset
    $R\subseteq \tilde X\times \tilde Y$ with $( \tilde i, \tilde
    j)\in R$, and such that for all $( \tilde x_1, \tilde y_1)\in R$,
    \begin{itemize}
    \item for any cube path $( \tilde x_1,\dots, \tilde x_n)$ in
      $\tilde X$, there exists a cube path $( \tilde y_1,\dots, \tilde
      y_n)$ in $\tilde Y$ with $( \tilde x_p, \tilde y_p)\in R$ for
      all $p= 1,\dots, n$,
    \item for any cube path $( \tilde y_1,\dots, \tilde y_n)$ in
      $\tilde Y$, there exists a cube path $( \tilde x_1,\dots, \tilde
      x_n)$ in $\tilde X$ with $( \tilde x_p, \tilde y_p)\in R$ for
      all $p= 1,\dots, n$;
    \end{itemize}
  \item\label{enu:hombis.prefix} there exists a precubical subset
    $R\subseteq \tilde X\times \tilde Y$ with $( \tilde i, \tilde
    j)\in R$, and such that for all $( \tilde x_1, \tilde y_1)\in R$,
    \begin{itemize}
    \item for any $\tilde x_2\sqsupseteq \tilde x_1$ in $\tilde X$,
      there exists $\tilde y_2\sqsupseteq \tilde y_1$ in $\tilde Y$
      for which $( \tilde x_2, \tilde y_2)\in R$,
    \item for any $\tilde y_2\sqsupseteq \tilde y_1$ in $\tilde Y$,
      there exists $\tilde x_2\sqsupseteq \tilde x_1$ in $\tilde X$
      for which $( \tilde x_2, \tilde y_2)\in R$.
    \end{itemize}
  \end{enumerate}
\end{proposition}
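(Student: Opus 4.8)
The plan is to reduce the whole statement to Theorem~\ref{thm:bisim} applied to the unfoldings, since all of \eqref{enu:hombis.onestep}--\eqref{enu:hombis.prefix} speak about $\tilde X$ and $\tilde Y$ directly. The first step is to identify homotopy bisimilarity of $X$ and $Y$ with bisimilarity of $\tilde X$ and $\tilde Y$ in the sense of Definition~\ref{defi:open_hda}. By definition of $\HDAh$, a span of open maps $X\from Z\to Y$ in $\HDAh$ is precisely an HDA $Z$ together with a pair of pointed precubical morphisms $\tilde Z\to\tilde X$ and $\tilde Z\to\tilde Y$ which, by Lemma~\ref{lem:openh=open}, are open in $\HDAh$ exactly when they are open in $\HDA$; hence a homotopy bisimulation is just a span of $\HDA$-open maps between $\tilde X$ and $\tilde Y$ whose apex happens to be an unfolding. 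Conversely, given any span of $\HDA$-open maps $\tilde X\from Z\to\tilde Y$, precomposing with the projection $\pi_Z\colon\tilde Z\to Z$, which is open by Corollary~\ref{cor:proj_open}, yields a span $\tilde X\from\tilde Z\to\tilde Y$ with an unfolding as apex; reading Lemma~\ref{lem:openh=open} the other way, this witnesses homotopy bisimilarity of $X$ and $Y$. So~\eqref{enu:hombis.box} holds if and only if $\tilde X$ and $\tilde Y$ are bisimilar in $\HDA$.

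Now apply Theorem~\ref{thm:bisim} to the HDA $\tilde i\colon *\to\tilde X$ and $\tilde j\colon *\to\tilde Y$. It yields the equivalence of ``$\tilde X$, $\tilde Y$ bisimilar in $\HDA$'' with statements that are word-for-word~\eqref{enu:hombis.onestep} and~\eqref{enu:hombis.cubepath}, except that there the zig-zag conditions are demanded only for \emph{reachable} $\tilde x_1$, $\tilde y_1$. This restriction is vacuous, because every cube of an unfolding is reachable: for $\tilde x=[x_1,\dots,x_m]\in\tilde X$, Lemma~\ref{le:unf_unique_lift} applied with $P=*$, $Q$ the path object of $(x_1,\dots,x_m)$, and $g$ the inclusion of its initial vertex produces a pointed cube path in $\tilde X$ lying over $(x_1,\dots,x_m)$, whose terminal cube is $[x_1,\dots,x_m]=\tilde x$ by Lemma~\ref{lem:cube_path_rep}. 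Thus~\eqref{enu:hombis.box}, \eqref{enu:hombis.onestep} and~\eqref{enu:hombis.cubepath} are equivalent.

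It remains to bring in~\eqref{enu:hombis.prefix}. The implication \eqref{enu:hombis.cubepath}$\Rightarrow$\eqref{enu:hombis.prefix} is immediate from Lemma~\ref{lem:reach-vs-prefix}: if $\tilde x_2\sqsupseteq\tilde x_1$ there is a cube path from $\tilde x_1$ to $\tilde x_2$, whose match under~\eqref{enu:hombis.cubepath} is a cube path in $\tilde Y$ from $\tilde y_1$ to some $\tilde y_2$ with $(\tilde x_2,\tilde y_2)\in R$, and then $\tilde y_2\sqsupseteq\tilde y_1$ by Lemma~\ref{lem:reach-vs-prefix} again; the other zig-zag is symmetric. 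I expect the converse, \eqref{enu:hombis.prefix}$\Rightarrow$\eqref{enu:hombis.onestep}, to be the main obstacle. Given $(\tilde x_1,\tilde y_1)\in R$ and a coface $\tilde x_1=\delta_k^0\tilde x_2$, one first uses Lemma~\ref{lem:reach-vs-prefix} to read this as $\tilde x_1\sqsubseteq\tilde x_2$ and applies~\eqref{enu:hombis.prefix} to get \emph{some} $\tilde y'\sqsupseteq\tilde y_1$ with $(\tilde x_2,\tilde y')\in R$; but $\tilde y'$ is a priori connected to $\tilde y_1$ only by a long cube path rather than by the single $\delta_k^0$-step the statement requires, and, $R$ being merely a relation, $\tilde x_1$ may have several $R$-partners. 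The work is to single out the right witness: here one uses that $R$ is a precubical subset, so that $(\tilde x_1,\delta_k^0\tilde y')=(\delta_k^0\tilde x_2,\delta_k^0\tilde y')\in R$, together with the rigidity of the higher-dimensional tree $\tilde Y$ (Proposition~\ref{thm:univ_cov}: pointed cube paths, and hence the relevant cofaces, are unique up to homotopy) to force $\delta_k^0\tilde y'=\tilde y_1$, after which $\tilde y_2=\tilde y'$ does the job. Once \eqref{enu:hombis.prefix}$\Rightarrow$\eqref{enu:hombis.onestep} is established, the chain of equivalences closes.
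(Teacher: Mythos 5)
Your first two paragraphs, together with your proof of \eqref{enu:hombis.cubepath}$\Rightarrow$\eqref{enu:hombis.prefix}, are correct and essentially the paper's own route: the paper also obtains \eqref{enu:hombis.onestep} and \eqref{enu:hombis.cubepath} from Theorem~\ref{thm:bisim} applied to $\tilde X$, $\tilde Y$ (reachability being automatic in unfoldings), re-enters \eqref{enu:hombis.box} by composing a span of open maps with projections using Corollary~\ref{cor:proj_open} and Lemma~\ref{lem:openh=open}, and relates \eqref{enu:hombis.cubepath} to \eqref{enu:hombis.prefix} through Lemma~\ref{lem:reach-vs-prefix}.

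The gap is exactly where you suspected it, in \eqref{enu:hombis.prefix}$\Rightarrow$\eqref{enu:hombis.onestep}, and your proposed fix does not work. The forcing claim ``precubical closure plus rigidity of $\tilde Y$ gives $\delta_k^0\tilde y'=\tilde y_1$'' is false: $R$ is merely a relation, so $(\tilde x_1,\tilde y_1)\in R$ and $(\tilde x_1,\delta_k^0\tilde y')\in R$ can perfectly well coexist with $\delta_k^0\tilde y'\ne\tilde y_1$, and in a higher-dimensional tree $\tilde y_1\sqsubseteq\tilde y'$ with dimension gap one does not make $\tilde y_1$ a face of $\tilde y'$ (in the unfolding of a two-edge path $i\to a\to v\to b\to w$ one has $[i]\sqsubseteq[i,a,v,b]$, yet $\delta_1^0[i,a,v,b]=[i,a,v]\ne[i]$). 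Worse, the statement you are trying to prove, namely that the \emph{same} $R$ satisfying \eqref{enu:hombis.prefix} already satisfies the one-step clause \eqref{enu:hombis.onestep} with the given index $k$, is simply not true. Take $X=Y$ the filled square, a single $2$-cube $s$ with $L=\delta_1^0 s$, $B=\delta_2^0 s$, $r=\delta_1^1 s$, $T=\delta_2^1 s$, pointed at the initial corner; it is a higher-dimensional tree, hence its own unfolding. Let $R$ consist of the diagonal together with $(L,B)$, $(\delta_1^1 L,\delta_1^1 B)$ and $(T,r)$. This is a precubical subset containing the base point pair, and a direct check of the finitely many pairs shows it satisfies \eqref{enu:hombis.prefix} (the witness for $s\sqsupseteq L$ at the pair $(L,B)$ being $(s,s)$ itself). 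But at $(L,B)$ with $L=\delta_1^0 s$, clause \eqref{enu:hombis.onestep} would require a $2$-cube $\tilde y_2$ with $B=\delta_1^0\tilde y_2$, and no such cube exists, since $\delta_1^0 s=L$. So \eqref{enu:hombis.prefix} can only give \eqref{enu:hombis.onestep} up to changing the relation; any repair must either construct a new relation or, as the paper does, connect \eqref{enu:hombis.prefix} with the cube-path formulation \eqref{enu:hombis.cubepath} (where only \emph{some} face index is needed for each step) via Lemma~\ref{lem:reach-vs-prefix} and then close the cycle through \eqref{enu:hombis.box} by the span construction of your first paragraph, never aiming directly at the fixed-index one-step clause.
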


Again, the requirement that $R$ be a precubical subset is equivalent to
saying that whenever $( \tilde x, \tilde y)\in R$, then also $(
\delta_k^\nu \tilde x, \delta_k^\nu \tilde y)\in R$ for any $k$ and
$\nu\in\{ 0, 1\}$.  The next result is what will allow us to relate
hp-bisimilarity and bisimilarity.

\begin{theorem}
  \label{th:hbis=bis}
  HDA $X$, $Y$ are homotopy bisimilar if and only if they are bisimilar.
\end{theorem}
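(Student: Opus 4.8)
The plan is to prove both directions by constructing an explicit span of open maps in one category out of a span of open maps in the other, using the unfolding functor together with Lemma~\ref{lem:openh=open} and Corollary~\ref{cor:proj_open} as the bridge. First, suppose $X$ and $Y$ are bisimilar, witnessed by a span $X \tfrom{f} Z \tto{g} Y$ of open maps in $\HDA$. By Lemma~\ref{lem:openh=open}, the unfolded maps $\tilde f: \tilde Z \to \tilde X$ and $\tilde g: \tilde Z \to \tilde Y$ are open in $\HDA$, hence — reading the same lemma in the other direction, since $\tilde f = \widetilde{f}$ is precisely the morphism $Z \to X$ in $\HDAh$ applied to unfoldings — the maps $f, g$ are open as morphisms $Z \to X$, $Z \to Y$ in $\HDAh$. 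Therefore $X \tfrom{f} Z \tto{g} Y$ is also a span of open maps in $\HDAh$, so $X$ and $Y$ are homotopy bisimilar. This direction is essentially immediate from the stated lemmas.

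For the converse, suppose $X$ and $Y$ are homotopy bisimilar, witnessed by a span $X \tfrom{f} Z \tto{g} Y$ of open maps in $\HDAh$. By Lemma~\ref{lem:openh=open}, $\tilde f: \tilde Z \to \tilde X$ and $\tilde g: \tilde Z \to \tilde Y$ are open in $\HDA$. Now I would use the projections: by Corollary~\ref{cor:proj_open}, $\pi_X: \tilde X \to X$ and $\pi_Y: \tilde Y \to Y$ are open in $\HDA$, and composites of open maps are open. Hence
\begin{equation*}
  X \;\tfrom{\;\pi_X \circ \tilde f\;}\; \tilde Z \;\tto{\;\pi_Y \circ \tilde g\;}\; Y
\end{equation*}
is a span of open maps in $\HDA$ with apex $\tilde Z \in \HDA$, so $X$ and $Y$ are bisimilar. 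The key point making this work is that $\tilde Z$ is itself a legitimate object of $\HDA$ (indeed a higher-dimensional tree, by Proposition~\ref{thm:univ_cov}), so the unfolded span lives entirely inside $\HDA$ after post-composing with the projections.

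The main obstacle I anticipate is not in the span-chasing above but in making sure the two readings of Lemma~\ref{lem:openh=open} line up: the lemma says a morphism $f$ in $\HDAh$ is open iff $\tilde f: \tilde X \to \tilde Y$ is open in $\HDA$, and separately that openness in $\HDA$ is preserved by unfolding. One must be careful that the morphism $\tilde f$ appearing when we start in $\HDAh$ is genuinely the same data as the $\HDA$-morphism $\tilde X \to \tilde Y$ to which we want to apply Lemma~\ref{lem:open}'s zig-zag characterization — this is true essentially by Definition~\ref{de:hdah}, since a morphism in $\HDAh$ \emph{is} a precubical morphism of unfoldings, but it is worth spelling out so that the composition $\pi_Y \circ \tilde g$ typechecks as an $\HDA$-morphism. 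Once that identification is explicit, both directions are short, and the theorem follows by combining them.
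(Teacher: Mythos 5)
Your proposal is correct and follows essentially the same route as the paper's own proof: lifting a span in $\HDA$ to $\HDAh$ via Lemma~\ref{lem:openh=open} for one direction, and for the converse reading a span in $\HDAh$ as a span of unfoldings in $\HDA$ and post-composing with the projections, which are open by Corollary~\ref{cor:proj_open}. The identification you flag at the end (that an $\HDAh$-morphism is by Definition~\ref{de:hdah} already a precubical morphism of unfoldings) is exactly the implicit step the paper also relies on, so there is nothing to add.
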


The following is an unlabeled version of hp-bisimilarity
for HDA as defined in~\cite{DBLP:journals/tcs/Glabbeek06}:

\begin{definition}
  \label{defi:histpres}
  HDA $i: *\to X$, $j: *\to Y$ are \emph{history-preserving bisimilar}
  if there exists a relation $R$ between pointed cube paths in $X$ and
  pointed cube paths in $Y$ for which $(( i),( j))\in R$, and such that
  for all $( \rho, \sigma)\in R$,
  \begin{itemize}
  \item for all $\rho'\sim \rho$, there exists $\sigma'\sim \sigma$
    with $( \rho', \sigma')\in R$,
  \item for all $\sigma'\sim \sigma$, there exists $\rho'\sim \rho$
    with $( \rho', \sigma')\in R$,
  \item for all $\rho'\sqsupseteq \rho$, there exists
    $\sigma'\sqsupseteq \sigma$ with $( \rho', \sigma')\in R$,
  \item for all $\sigma'\sqsupseteq \sigma$, there exists
    $\rho'\sqsupseteq \rho$ with $( \rho', \sigma')\in R$.
  \end{itemize}
\end{definition}

We are ready to show the main result of this paper, which together with
Theorem~\ref{th:hbis=bis} gives our characterization for
hp-bisimilarity.

\begin{theorem}
  \label{thm:hombis=histpres}
  HDA $X$, $Y$ are homotopy bisimilar if and only if they are
  history-preserving bisimilar.
\end{theorem}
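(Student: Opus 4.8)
The plan is to establish the equivalence by going through the characterization of homotopy bisimilarity in Proposition~\ref{thm:hombis}, specifically item~\eqref{enu:hombis.prefix}, which speaks about prefixes in the unfoldings $\tilde X$ and $\tilde Y$. The key observation is that elements of $\tilde X$ are, by Definition~\ref{defi:unfold}, exactly homotopy classes of pointed cube paths in $X$, and that the prefix order $\sqsubseteq$ on $\tilde X$ is precisely the prefix order on homotopy classes discussed in Section~\ref{sec:hdp}. So a relation $R \subseteq \tilde X \times \tilde Y$ as in~\eqref{enu:hombis.prefix} is literally a relation between homotopy classes of pointed cube paths, and its defining back-and-forth condition on $\sqsupseteq$ is the ``prefix'' half of Definition~\ref{defi:histpres}.

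First I would prove the direction from homotopy bisimilarity to hp-bisimilarity. Given $R \subseteq \tilde X \times \tilde Y$ satisfying~\eqref{enu:hombis.prefix}, I would define a relation $R'$ between pointed cube paths of $X$ and of $Y$ by putting $(\rho, \sigma) \in R'$ whenever $([\rho], [\sigma]) \in R$. Then $(( i),( j)) \in R'$ because $(\tilde i, \tilde j) = ([i], [j]) \in R$. The two prefix conditions of Definition~\ref{defi:histpres} translate directly: if $\rho' \sqsupseteq \rho$, then $[\rho'] \sqsupseteq [\rho]$ in $\tilde X$, so there is $\tilde y_2 \sqsupseteq [\sigma]$ with $([\rho'], \tilde y_2) \in R$; picking any representative $\sigma'$ of $\tilde y_2$ that extends $\sigma$ (such exists by Lemma~\ref{lem:reach-vs-prefix} and the definition of $\sqsubseteq$ on homotopy classes) gives $(\rho', \sigma') \in R'$. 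The two homotopy conditions of Definition~\ref{defi:histpres} are immediate since $R'$ is defined through homotopy classes: if $\rho' \sim \rho$ then $[\rho'] = [\rho]$, so $(\rho', \sigma) \in R'$ works. The only subtlety is the requirement in~\eqref{enu:hombis.prefix} that $R$ be a \emph{precubical subset} of $\tilde X \times \tilde Y$, which has no counterpart in Definition~\ref{defi:histpres}; but this direction does not need it, so it is harmless here.

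For the converse, starting from an hp-bisimulation $R$ in the sense of Definition~\ref{defi:histpres}, I would define $R_0 \subseteq \tilde X \times \tilde Y$ by $([\rho], [\sigma]) \in R_0$ iff $(\rho, \sigma) \in R$ — this is well-defined on classes precisely because of the two homotopy closure conditions of Definition~\ref{defi:histpres}. The prefix back-and-forth conditions of~\eqref{enu:hombis.prefix} then follow from the prefix conditions of Definition~\ref{defi:histpres} together with Lemma~\ref{lem:reach-vs-prefix}. The main obstacle, and the step I expect to require the most care, is arranging that the resulting relation is a \emph{precubical subset} of $\tilde X \times \tilde Y$, i.e.\ closed under the face maps $\tilde\delta_k^\nu$: given $([\rho],[\sigma]) \in R_0$ we must produce, in $R_0$, pairs of faces $(\tilde\delta_k^\nu[\rho], \tilde\delta_k^\nu[\sigma])$, and these are again homotopy classes of pointed cube paths (into lower faces of the last cube of $\rho$, resp.\ $\sigma$). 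The fix is to close $R_0$ under face maps: define $R_1$ to contain all pairs $(\tilde\delta_{k_1}^{\nu_1}\cdots\tilde\delta_{k_p}^{\nu_p}[\rho],\, \tilde\delta_{k_1}^{\nu_1}\cdots\tilde\delta_{k_p}^{\nu_p}[\sigma])$ for $([\rho],[\sigma]) \in R_0$ and arbitrary index sequences. One then checks, using Lemma~\ref{lem:hom_fullcube} (cube paths inside one cube are homotopic) and the precubical identity~\eqref{eq:pcub}, that $R_1$ is still a valid relation for~\eqref{enu:hombis.prefix}: the $\sqsupseteq$-back-and-forth on the new pairs is inherited by applying the conditions of Definition~\ref{defi:histpres} to suitably chosen representatives that first walk out to the relevant face. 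This closure argument, keeping track of which representative path realizes a given prefix or face relation, is where the real work lies; everything else is bookkeeping around the dictionary ``element of $\tilde X$ $=$ homotopy class of pointed cube path'' and ``$\sqsubseteq$ on $\tilde X$ $=$ prefix of homotopy classes.''
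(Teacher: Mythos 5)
Your overall route is the same as the paper's: both directions are obtained by translating along the dictionary ``cell of $\tilde X$ $=$ homotopy class of pointed cube paths'', ``$\sqsubseteq$ on $\tilde X$ $=$ prefix of classes'', through Proposition~\ref{thm:hombis}\eqref{enu:hombis.prefix}, and your direction from homotopy bisimilarity to hp-bisimilarity is essentially the paper's argument and is fine. In the converse direction, however, two things go wrong. A smaller one: your relation $R_0$ must be defined existentially (``$(\tilde x,\tilde y)\in R_0$ iff there \emph{exist} $\rho\in\tilde x$, $\sigma\in\tilde y$ with $(\rho,\sigma)\in R$''), as the paper does; your ``iff'' over arbitrary representatives is not well-defined, and the two $\sim$-clauses of Definition~\ref{defi:histpres} do not repair this — for $\rho'\sim\rho$ they only give \emph{some} $\sigma'\sim\sigma$ with $(\rho',\sigma')\in R$, not $(\rho',\sigma)\in R$. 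Moreover, those $\sim$-clauses are not about well-definedness at all: they are exactly what is needed, and what you omit, in verifying the $\sqsupseteq$-back-and-forth. Given $(\tilde x_1,\tilde y_1)$ witnessed by $(\rho_1,\sigma_1)\in R$ and $\tilde x_2\sqsupseteq\tilde x_1$, the representative $\rho_1'\in\tilde x_1$ that is actually extended by a representative of $\tilde x_2$ is in general not $\rho_1$, so one must first use the $\sim$-clause to obtain $\sigma_1'\sim\sigma_1$ with $(\rho_1',\sigma_1')\in R$ and only then apply the prefix clause; ``the prefix conditions together with Lemma~\ref{lem:reach-vs-prefix}'' alone do not suffice.

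The genuine gap is the face-closure step, which is also exactly where you depart from the paper: the paper performs no closure, but verifies the back-and-forth for its relation $\tilde R$ directly and asserts its precubicality. Your $R_1$ adds pairs of iterated faces, and for these new pairs the condition of~\eqref{enu:hombis.prefix} does not ``inherit'' in the way you sketch. An extension $\tilde x_2\sqsupseteq\tilde\delta_k^0[\rho]$ may branch off at $\delta_k^0 x_m$ into a cube not containing $x_m$, so it is not an extension of anything homotopic to $\rho$; applying Definition~\ref{defi:histpres} at $(\rho,\sigma)$ (or at $((i),(j))$, which every pointed path extends) then only produces a partner extending $[\sigma]$ (or $[j]$), with nothing forcing it to extend the specific face class $\tilde\delta_k^0[\sigma]$. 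Since an hp-bisimulation is only forward-closed — it carries no information about prefixes or faces of related paths — the back-and-forth at the added pairs cannot be obtained by ``walking out to the relevant face'' and citing Definition~\ref{defi:histpres}; Lemma~\ref{lem:hom_fullcube} and the precubical identity do not bridge this. So the step you yourself identify as ``where the real work lies'' is precisely the step that is missing, and as sketched it would fail; a correct argument needs either a different construction of the relation or a direct verification in the style of the paper.
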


\begin{corollary}
  \label{co:decidable}
  History-preserving bisimilarity is decidable for finite HDA.
\end{corollary}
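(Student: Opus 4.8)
The plan is to chain together the paper's three main theorems and then observe that what remains is a finitary check on the product $X\times Y$. By Theorem~\ref{thm:hombis=histpres} and Theorem~\ref{th:hbis=bis}, HDA $X$ and $Y$ are history-preserving bisimilar iff they are homotopy bisimilar iff they are bisimilar in the sense of Definition~\ref{defi:open_hda}. The point of invoking Theorem~\ref{th:hbis=bis} here is that hp-bisimilarity and homotopy bisimilarity are defined through the generally infinite unfoldings $\tilde X,\tilde Y$, whereas plain bisimilarity lives on $X$ and $Y$ themselves; so after this reduction nothing infinite is left. Applying the equivalence \eqref{enu:bisim.box}$\Leftrightarrow$\eqref{enu:bisim.onestep} of Theorem~\ref{thm:bisim}, $X$ and $Y$ are hp-bisimilar iff there is a precubical subset $R\subseteq X\times Y$ with $(i,j)\in R$ satisfying the one-step back-and-forth clauses of \eqref{enu:bisim.onestep}. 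It thus suffices to show that the existence of such an $R$ is decidable for finite $X,Y$.

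So suppose $X$ and $Y$ are finite, \ie~have finitely many cubes (hence are finite-dimensional). Products in $\pCub$ are computed dimensionwise, so $X\times Y$ is finite, with all face maps effectively given, and it has only finitely many precubical subsets. Reachability of a cube is decidable: one does a graph search from the base point along the relations $x=\delta_k^0 x'$ and $x'=\delta_k^1 x$ (equivalently, one enumerates pointed cube paths until no new cube appears). Hence for any given precubical subset $R\subseteq X\times Y$ one can effectively test whether $(i,j)\in R$ and whether the two back-and-forth clauses of \eqref{enu:bisim.onestep} hold for the finitely many reachable pairs of $R$. Trying all (finitely many) candidates $R$ therefore decides whether a witness exists, and with it hp-bisimilarity of $X$ and $Y$.

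I do not expect a genuine obstacle; the only point needing care is the bookkeeping in the first paragraph --- verifying that the cited equivalences really do remove every reference to $\tilde X$ and $\tilde Y$, so that the algorithm only ever touches the finite objects. A remark worth adding, though not needed for decidability itself, is that the class of admissible $R$ is closed under unions (unions of precubical subsets are precubical, and the back-and-forth clauses are preserved by unions), so there is a largest witness; it is obtained by the standard greatest-fixed-point iteration, starting from the set of reachable pairs in $X\times Y$ and repeatedly deleting pairs that violate \eqref{enu:bisim.onestep}, and $X,Y$ are hp-bisimilar exactly when $(i,j)$ survives. This yields a concrete and more efficient procedure, paralleling the classical algorithm for ordinary bisimilarity.
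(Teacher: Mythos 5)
Your proposal is correct and follows essentially the same route as the paper: reduce hp-bisimilarity to the bisimilarity of Definition~\ref{defi:open_hda} via Theorems~\ref{thm:hombis=histpres} and~\ref{th:hbis=bis}, and then decide the one-step condition of Theorem~\ref{thm:bisim}\eqref{enu:bisim.onestep} on the finite product $X\times Y$. The paper's proof is just the one-line observation that this condition yields a fixed-point algorithm as for standard bisimilarity, which is exactly the refinement you add at the end; your explicit enumeration of candidate precubical subsets is a (less efficient but equally valid) way of saying the same thing.
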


\section{Labels}
\label{sec:labels}

We finish this paper by showing how to introduce labels into the above
framework of bisimilarity and homotopy bisimilarity.  Also in the
labeled case, we are able to show that the three notions of
bisimilarity, homotopy bisimilarity and history-preserving bisimilarity
agree.

For labeling HDA, we need a subcategory of $\pCub$ isomorphic to the
category of sets and functions.  Given a finite or countably infinite
set $S=\{ a_1, a_2,\dots\}$, we construct a precubical set $\bang S=\{
\bang S_n\}$ by letting
\begin{equation*}
  \bang S_n=\big\{( a_{ i_1},\dots, a_{ i_n})\mid
  i_k\le i_{ k+ 1}\text{ for all } k= 1,\dots,n- 1\big\}
\end{equation*}
with face maps defined by $\delta_k^\nu( a_{ i_1},\dots, a_{ i_n})=(
a_{ i_1},\dots, a_{ i_{ k- 1}}, a_{ i_{ k+ 1}},\dots, a_{ i_n})$.

\begin{definition}
  The category of \emph{higher-dimensional tori} $\HDAt$ is the full
  subcategory of $\pCub$ generated by the objects $\bang S$.
\end{definition}

As any object in $\HDAt$ has precisely one $0$-cube, the pointed
category $*\downarrow \HDAt$ is isomorphic to $\HDAt$.  It is not
difficult to see that $\HDAt$ is indeed isomorphic to the category of
finite or countably infinite sets and functions,
\cf~\cite{DBLP:journals/entcs/GoubaultM12}.

\begin{definition}
  The category of \emph{labeled higher-dimensional automata} is the
  pointed arrow category $\LHDA= *\downarrow \pCub\to \HDAt$, with
  objects $*\to X\to \bang S$ labeled pointed precubical sets and
  morphisms commutative diagrams
  \begin{equation*}
    \xymatrix@C=1.5em@R=1.2em{%
      & {*} \ar[dl] \ar[dr] \\ X \ar[rr]_f \ar[d] && Y \ar[d] \\ {\bang
        S} \ar[rr]_\sigma && {\bang T}
    }
  \end{equation*}
\end{definition}

\begin{definition}
  A morphism $( f, \id):( *\to X\to \bang S)\to( *\to Y\to \bang S)$ in
  $\LHDA$ is \emph{open} if its component $f$ is open in $\HDA$.
  Labeled HDA $*\to X\to \bang S$, $*\to Y\to \bang S$ are
  \emph{bisimilar} if there is $*\to Z\to \bang S\in \LHDA$ and a span
  of open maps $X\from Z\to Y$ in $\LHDA$.
\end{definition}

Next we establish a correspondence between split
traces~\cite{DBLP:journals/tcs/Glabbeek06} and cube paths in
higher-dimensional tori.  For us, a \emph{split trace} over a finite or
countably infinite set $S$ is a pointed cube path in $\bang S$.  Hence
\eg~a split trace $a^+b^+a^-b^+b^-$ (in the notation
of~\cite{DBLP:journals/tcs/Glabbeek06}) corresponds to the cube path $(
i, a, ab, b, bb, b)$.  Both indicate the start of an $a$ event, followed
by the start of a $b$ event, the end of an $a$ event, the start of a $b$
event, and the end of a $b$ event.  Note that contrary to
ST-traces~\cite{DBLP:journals/tcs/Glabbeek06}, the split trace contains
no information as to which of the two $b$ events has terminated at the
$b^-$.

By definition, a torus $\bang S$ on a finite or countably infinite set
$S=\{ a_1, a_2,\dots\}$ contains all $n$-cubes $( a_{ i_1},\dots, a_{
  i_n})$.  Hence we have the following lemma:

\begin{lemma}
  \label{lem:cube_path_hom_bang}
  Let $( x_1,\dots, x_m)$, $( y_1,\dots, y_m)$ be pointed cube paths in
  $\bang S$ with $x_m= y_m$.  Then $( x_1,\dots, x_m)\sim ( y_1,\dots,
  y_m)$. \qed
\end{lemma}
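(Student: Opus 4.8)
The final statement to prove is Lemma~\ref{lem:cube_path_hom_bang}: any two pointed cube paths in a torus $\bang S$ that end at the same cube $x_m = y_m$ are homotopic.

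\medskip

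The plan is to proceed by induction on the common length $m$ of the two cube paths, reducing the problem to the structure of the torus $\bang S$ and exploiting Lemma~\ref{lem:hom_fullcube}. First I would observe that, since $\bang S$ has a unique $0$-cube, any pointed cube path is automatically pointed at the right place, so the base case $m = 1$ (and $m=2$) is trivial. For the inductive step, consider the final ``move'' of each path: either $x_{m-1} = \delta_k^0 x_m$ (a start move) or $x_m = \delta_k^1 x_{m-1}$ (an end move), and similarly for $y$. The key geometric fact about $\bang S$ is that every cube $( a_{i_1},\dots,a_{i_n})$ has \emph{all} its faces present and, more importantly, that between any two cubes $u, v$ with $v$ a face of $u$ there is a canonical ``full-cube'' cube path realizing the descent, and by Lemma~\ref{lem:hom_fullcube} all such full-cube descents from a fixed cube are homotopic. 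So the strategy is: normalise each of the two cube paths to a standard form in which one first descends to a minimal-dimensional cube (a $0$-cube, i.e.\ $i$) via lower faces and then climbs back up, and show any cube path is homotopic to such a normal form.

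\medskip

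Concretely, the main technical step I would carry out is a \emph{local exchange} argument. Given a cube path $(x_1,\dots,x_m)$ in $\bang S$ with $x_m = x$, I want to show it is homotopic to the ``canonical'' pointed cube path to $x$ determined by writing $x = (a_{i_1},\dots,a_{i_n})$ and reading off the sequence $i, a_{i_1}, (a_{i_1}a_{i_2}), \dots, x$ (all start moves, in index order). To do this I induct on $m$: by the inductive hypothesis the prefix $(x_1,\dots,x_{m-1})$ is homotopic to the canonical path to $x_{m-1}$; it then remains to show that appending the last move and then re-canonicalising can be done via adjacencies. This is exactly where the four adjacency flavors come in — a start move followed by what should have been an earlier start move is handled by the first adjacency type (dimensions agree, $k < \ell$), an end move immediately after a start move that it cancels is handled by the third or fourth flavor (dimensions differ by $2$), and the availability of all intermediate cubes in $\bang S$ guarantees the required witnessing cubes $x_p, y_p$ exist. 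Lemma~\ref{lem:hom_fullcube} packages the purely ``inside one cube'' rearrangements so that I do not have to chain adjacencies by hand.

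\medskip

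The main obstacle I anticipate is the bookkeeping in the exchange step: a cube path in $\bang S$ can interleave start and end moves in a complicated way (as in the split trace $a^+b^+a^-b^+b^-$), and pushing an ``out-of-order'' start move leftward past a block of earlier moves, or cancelling a start/end pair that straddles other moves, requires knowing that every cube encountered along the way genuinely lies in $\bang S$ and that the precubical identity~\eqref{eq:pcub} lets the face indices be renumbered consistently. The cleanest way around this is probably \emph{not} to track arbitrary interleavings directly but to first prove, again by induction on length, that any cube path in $\bang S$ is homotopic to one in which all end moves come after all start moves (a ``separated'' form) — each transposition of an adjacent end-then-start pair being a single adjacency of the third or fourth flavor — and only then appeal to Lemma~\ref{lem:hom_fullcube} to fix the order within the all-start prefix and within the all-end suffix. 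Since $x_m = y_m$ forces the separated normal forms of the two paths to have identical all-start prefixes (both equal the canonical path to $x_m$) and, because the suffix is empty when the path ends at $x_m$ rather than continuing, the two normal forms coincide, giving $(x_1,\dots,x_m) \sim (y_1,\dots,y_m)$.
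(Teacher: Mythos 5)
Your overall plan (normalise each path, using the fact that $\bang S$ contains every cube) is the natural one, and the intermediate ``separated form'' step is plausible: in $\bang S$ the filler cube needed to flip an end-move-then-start-move into a start-then-end always exists, so that part can be pushed through. The genuine gap is your closing step. The all-start prefix of a separated path ascends to the \emph{top} cube of that path, whose dimension equals the number of start moves, not to $x_m$; and the descending suffix is empty only if the path contains no end move at all. Two pointed cube paths with the same endpoint and the same length can ascend to different top cubes, e.g.\ $(i,a,aa,a,i)$ versus $(i,a,ab,b,i)$, or even more simply $(i,a,i)$ versus $(i,b,i)$ for $a\neq b$, so the two ``normal forms'' do not coincide and nothing in your argument reconciles them. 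The same confusion already affects your earlier claim that an arbitrary path is homotopic to the canonical all-start path to $x$: adjacency relates paths of equal length, so no path containing an end move can be homotopic to that canonical path.

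Moreover, the gap cannot be closed by further adjacencies, so it is not a matter of bookkeeping. Each of the four adjacency flavours either permutes two consecutive start moves, permutes two consecutive end moves, or exchanges a start-then-end passage with the corresponding end-then-start passage through the complementary cube; in $\bang S$ every one of these leaves unchanged the multiset of labels started (and the multiset ended) along the path. This multiset is therefore a homotopy invariant, and it separates the pairs above: $(i,a,i)$ starts only $a$ while $(i,b,i)$ starts only $b$; $(i,a,aa,a,i)$ starts $a$ twice while $(i,a,ab,b,i)$ starts $a$ and $b$ --- yet each pair consists of pointed cube paths of equal length with the same final cube. So a normal-form argument of the kind you propose cannot establish the statement in the generality in which it is phrased; note that the paper itself supplies no argument beyond the preceding remark that $\bang S$ contains all $n$-cubes (the proof is omitted), and your missing step is exactly where the real content lies: one needs some additional hypothesis relating the events executed by the two paths (for instance, equality of the multisets of started and ended labels, as holds in the intended applications) before the separation-plus-Lemma~\ref{lem:hom_fullcube} strategy can be completed.
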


Homotopy classes of split traces are thus determined by their endpoint
and length:

\begin{corollary}
  The unfolding of a higher-dimensional torus $i: *\to \bang S\in
  \HDAt$ is isomorphic to the pointed precubical set $j: *\to Y$ given
  as follows:
  \begin{itemize}
  \item $Y_n=\{( x, m)\mid x\in \bang S_n, m\ge n, m\equiv n\mod 2\}$,
    $j=( i, 0)$
  \item $\delta_k^0( x, m)=( \delta_k^0 x, m- 1)$, $\delta_k^1( x, m)=(
    \delta_k^1 x, m+ 1)$ \qed
  \end{itemize}
\end{corollary}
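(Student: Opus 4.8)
The plan is to write down an explicit pointed precubical isomorphism $\Phi\colon\widetilde{\bang S}\to Y$. By Definition~\ref{defi:unfold} an $n$-cube of $\widetilde{\bang S}$ is a homotopy class $[x_1,\dots,x_\ell]$ of pointed cube paths in $\bang S$ with $x_\ell\in\bang S_n$, and I would set $\Phi[x_1,\dots,x_\ell]=(x_\ell,\ell-1)$. The content of the proof is that, in a torus, such a homotopy class remembers nothing beyond its endpoint and its length, which is precisely Lemma~\ref{lem:cube_path_hom_bang}; so $\Phi$ ought to be the required bijection once the indexing is matched up correctly, and the remaining work is to check that it is a well-defined precubical morphism and that its inverse is too.

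First I would check that $\Phi$ is well-defined with image in $Y$. Well-definedness is immediate, since adjacent cube paths have, by definition, the same length and the same two endpoints, hence so do homotopic ones. That $(x_\ell,\ell-1)$ lies in $Y_n$ is a parity count: each step from $x_j$ to $x_{j+1}$ in a cube path is either of the form $x_j=\delta_k^0 x_{j+1}$ (raising the dimension by one) or of the form $x_{j+1}=\delta_k^1 x_j$ (lowering it by one), and since our path runs from $x_1=i\in\bang S_0$ to $x_\ell\in\bang S_n$ in $\ell-1$ steps, its up-steps outnumber its down-steps by exactly $n$; hence $m:=\ell-1$ satisfies $m\ge n$ and $m\equiv n\mod 2$, the constraints defining $Y_n$.

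Next I would prove $\Phi$ bijective in each degree. Injectivity is Lemma~\ref{lem:cube_path_hom_bang}: if $\Phi[x_1,\dots,x_\ell]=\Phi[y_1,\dots,y_{\ell'}]$ then $\ell=\ell'$ and $x_\ell=y_{\ell'}$, so the two pointed cube paths are homotopic and the classes coincide. For surjectivity, given $(x,m)\in Y_n$ with $x=(a_{i_1},\dots,a_{i_n})$, write $m=n+2r$ and exhibit a pointed cube path of length $m+1$ ending at $x$: first grow $x$ coordinate by coordinate, from $()=i$ through $(a_{i_1}),(a_{i_1},a_{i_2}),\dots$ up to $(a_{i_1},\dots,a_{i_n})=x$ ($n$ up-steps), and then append $r$ copies of the length-two detour $x,(a_1,a_{i_1},\dots,a_{i_n}),x$, consisting of an up-step $x=\delta_1^0(a_1,a_{i_1},\dots,a_{i_n})$ followed by a down-step $\delta_1^1(a_1,a_{i_1},\dots,a_{i_n})=x$, which is legitimate because $a_1$ is the least symbol, so the tuple $(a_1,a_{i_1},\dots,a_{i_n})$ is admissible. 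This path has length $n+1+2r=m+1$ and endpoint $x$, so $\Phi$ maps its class to $(x,m)$. (The same construction shows more generally that to each cube $z\in\bang S$ there is a pointed cube path of every admissible length; I will reuse this below.)

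Finally I would verify that $\Phi$ respects faces and the basepoint. Basepoint: $\Phi(\tilde i)=\Phi[i]=(i,0)=j$. Upper faces: $\tilde\delta_k^1[x_1,\dots,x_\ell]=[x_1,\dots,x_\ell,\delta_k^1 x_\ell]$ by Definition~\ref{defi:unfold}, a class of length $\ell+1$ with endpoint $\delta_k^1 x_\ell$, so $\Phi(\tilde\delta_k^1[x_1,\dots,x_\ell])=(\delta_k^1 x_\ell,\ell)=\delta_k^1(x_\ell,\ell-1)$. Lower faces: $\tilde\delta_k^0[x_1,\dots,x_\ell]$ is the class of any pointed cube path $\sigma$ with endpoint $\delta_k^0 x_\ell$ for which $\sigma*x_\ell\sim(x_1,\dots,x_\ell)$; a $\sigma$ of length $\ell-1$ with endpoint $\delta_k^0 x_\ell$ exists by the construction above, and by Lemma~\ref{lem:cube_path_hom_bang} any such $\sigma$ automatically meets the homotopy condition, since $\sigma*x_\ell$ then has length $\ell$ and endpoint $x_\ell$; hence $\Phi(\tilde\delta_k^0[x_1,\dots,x_\ell])=(\delta_k^0 x_\ell,\ell-2)=\delta_k^0(x_\ell,\ell-1)$. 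Thus $\Phi$ is a pointed precubical isomorphism $\widetilde{\bang S}\cong Y$. The one \emph{genuinely delicate point} is the bookkeeping between the length of a cube path and its number of steps: that is where $m\ge n$, the congruence $m\equiv n\mod 2$, and the $\mp1$ shifts concealed in the face-map clauses of Definition~\ref{defi:unfold} all have to be made to agree; everything else is a direct transcription via Lemma~\ref{lem:cube_path_hom_bang}.
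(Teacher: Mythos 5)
Your proof is correct and follows exactly the route the paper intends: the corollary is left as an immediate consequence of Lemma~\ref{lem:cube_path_hom_bang} (homotopy classes of pointed cube paths in $\bang S$ are determined by endpoint and length), and your map $[x_1,\dots,x_\ell]\mapsto(x_\ell,\ell-1)$, with the parity count and the length-shift check for $\tilde\delta_k^0$, $\tilde\delta_k^1$, is just that argument written out in full. No discrepancies worth noting (beyond the implicit assumption, shared with the paper's statement, that $S$ is nonempty).
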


The definitions of open maps and bisimilarity in $\HDAh$ can now
easily be extended to the labeled case.  Again, we only need
label-preserving morphisms.

\begin{definition}
  The category of \emph{labeled higher-dimensional automata up to
    homotopy} $\LHDAh$ has as objects labeled HDA $*\to X\to \bang S$
  and as morphisms pairs of precubical morphisms $( f, \sigma):( *\to
  \tilde X\to \bang \tilde S)\to( *\to \tilde Y\to \bang \tilde T)$ of
  unfoldings.
\end{definition}

\begin{definition}
  A morphism $( f, \id):( *\to X\to \bang S)\to( *\to Y\to \bang S)$ in
  $\LHDAh$ is \emph{open} if its component $f$ is open in $\HDAh$.
  Labeled HDA $*\to X\to \bang S$, $*\to Y\to \bang S$ are
  \emph{homotopy bisimilar} if there is $*\to Z\to \bang S\in \LHDAh$
  and a span of open maps $X\from Z\to Y$ in $\LHDAh$.
\end{definition}

The proof of the next theorem is exactly the same as the one for
Theorem~\ref{th:hbis=bis}.

\begin{theorem}
  Labeled HDA $X$, $Y$ are homotopy bisimilar if and only if they are
  bisimilar.  \qed
\end{theorem}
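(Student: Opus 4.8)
The plan is to reproduce the proof of Theorem~\ref{th:hbis=bis} essentially verbatim, carrying along the (trivial) label components. Both implications rest on two facts already established: openness in $\HDAh$ agrees with openness in $\HDA$ under unfolding, and the unfolding of an $\HDA$-open map is again $\HDA$-open (Lemma~\ref{lem:openh=open}); and projections $\pi_X\colon\tilde X\to X$ are open in $\HDA$ (Corollary~\ref{cor:proj_open}). Recall also that composites of open maps are open.

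For the direction ``bisimilar $\Rightarrow$ homotopy bisimilar'', let $X \from Z \to Y$ be a span of open maps in $\LHDA$, with legs $(g,\id)$ and $(h,\id)$, so that $g\colon Z\to X$ and $h\colon Z\to Y$ are open in $\HDA$. By the second assertion of Lemma~\ref{lem:openh=open} the unfoldings $\tilde g\colon\tilde Z\to\tilde X$ and $\tilde h\colon\tilde Z\to\tilde Y$ are open in $\HDA$, and by the first assertion they are open as morphisms $Z\to X$, $Z\to Y$ of $\HDAh$. Their label components are still identities, so $(\tilde g,\id)$ and $(\tilde h,\id)$ form a span of open maps $X \from Z \to Y$ in $\LHDAh$; hence $X$ and $Y$ are homotopy bisimilar.

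For the converse, let $X \from Z \to Y$ be a span of open maps in $\LHDAh$, with legs $(f,\id)$ and $(h,\id)$. Then $f\colon\tilde Z\to\tilde X$ and $h\colon\tilde Z\to\tilde Y$ are open in $\HDAh$, hence open in $\HDA$ by Lemma~\ref{lem:openh=open}. The projections $\pi_X\colon\tilde X\to X$ and $\pi_Y\colon\tilde Y\to Y$ are open in $\HDA$ (Corollary~\ref{cor:proj_open}) and are label-preserving, so the composites $\pi_X\circ f\colon\tilde Z\to X$ and $\pi_Y\circ h\colon\tilde Z\to Y$ are open, label-preserving maps forming a span $X \from \tilde Z \to Y$ of open maps in $\LHDA$. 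Thus $X$ and $Y$ are bisimilar.

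There is no substantial obstacle here: the real content was already isolated in Lemma~\ref{lem:openh=open} and Corollary~\ref{cor:proj_open}. The only mild care needed is the bookkeeping of label components — checking that the given witnessing morphisms, the projections, and all composites above sit over identities of the labelling tori, and that this is preserved by composition and by the unfolding functor. As this is immediate, the argument is word-for-word that of Theorem~\ref{th:hbis=bis}.
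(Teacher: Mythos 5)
Your proposal is correct and coincides with the paper's own treatment: the paper simply states that the proof is exactly that of Theorem~\ref{th:hbis=bis}, which is precisely the argument you reproduce via Lemma~\ref{lem:openh=open} and Corollary~\ref{cor:proj_open}, with the routine check that identity label components are preserved under unfolding and composition with the projections.
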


Also for the labeled version, we can now show that homotopy bisimilarity
agrees with history-preserving bisimilarity.  We first recall the
definition from~\cite{DBLP:journals/tcs/Glabbeek06}, where we extend
the labeling morphisms to cube paths by $\lambda( x_1,\dots, x_m)=(
\lambda x_1,\dots, \lambda x_m)$:

\begin{definition}
  Labeled HDA $*\tto i X\tto \lambda \bang S$, $*\tto j Y\tto \mu \bang
  S$ are \emph{history-preserving bisimilar} if there exists a relation
  $R$ between pointed cube paths in $X$ and pointed cube paths in $Y$
  for which $(( i),( j))\in R$, and such that for all $( \rho,
  \sigma)\in R$,
  \begin{itemize}
  \item $\lambda( \rho)= \mu( \sigma)$,
  \item for all $\rho'\sim \rho$, there exists $\sigma'\sim \sigma$
    with $( \rho', \sigma')\in R$,
  \item for all $\sigma'\sim \sigma$, there exists $\rho'\sim \rho$
    with $( \rho', \sigma')\in R$,
  \item for all $\rho'\sqsupseteq \rho$, there exists
    $\sigma'\sqsupseteq \sigma$ with $( \rho', \sigma')\in R$,
  \item for all $\sigma'\sqsupseteq \sigma$, there exists
    $\rho'\sqsupseteq \rho$ with $( \rho', \sigma')\in R$.
  \end{itemize}
\end{definition}

\begin{theorem}
  \label{th:hombis=bistpres-l}
  Labeled HDA $X$, $Y$ are homotopy bisimilar if and only if they are
  history-preserving bisimilar.
\end{theorem}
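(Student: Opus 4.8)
The plan is to mimic the proof of Theorem~\ref{thm:hombis=histpres} essentially verbatim, carrying the labeling morphisms along as extra data that is preserved at every step. Recall that in the unlabeled case the equivalence between homotopy bisimilarity and history-preserving bisimilarity is established by translating between a precubical subset $R\subseteq\tilde X\times\tilde Y$ (the characterization from Proposition~\ref{thm:hombis}, item~\eqref{enu:hombis.prefix}) and a relation on pointed cube paths in $X$ and $Y$. The bridge is Lemma~\ref{lem:cube_path_rep}: a point $\tilde x\in\tilde X$ \emph{is} a homotopy class of pointed cube paths in $X$, so a relation on such homotopy classes is the same thing as a $\delta_k^\nu$-closed relation on $\tilde X\times\tilde Y$, and the prefix conditions $\tilde x_2\sqsupseteq\tilde x_1$ on the $\tilde X$ side correspond exactly to the conditions $\rho'\sqsupseteq\rho$ and $\rho'\sim\rho$ on cube paths (the homotopy condition being absorbed because a point of $\tilde X$ already identifies homotopic paths). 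For the labeled version, I would additionally observe that the label morphism $\lambda:X\to\bang S$ lifts to $\tilde\lambda:\tilde X\to\widetilde{\bang S}$, and by the Corollary computing the unfolding of a torus, $\tilde\lambda$ records on each $\tilde x$ the pair $(\lambda x_m,m)$ where $m$ is the length of any representative cube path; hence the extra clause $\lambda(\rho)=\mu(\sigma)$ in the labeled hp-bisimilarity definition corresponds precisely to requiring that $R\subseteq\tilde X\times\tilde Y$ be compatible with the projections to the common torus unfolding $\widetilde{\bang S}$.

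Concretely I would proceed as follows. First, unpack what ``homotopy bisimilar'' means in $\LHDAh$: a span $X\from Z\to Y$ of open maps whose label components are identities, i.e.\ a span over $\bang S$; by the labeled analogue of Theorem~\ref{th:hbis=bis} (stated just above) and the labeled analogue of Proposition~\ref{thm:hombis}, this is equivalent to the existence of a precubical subset $R\subseteq\tilde X\times\tilde Y$ containing $(\tilde i,\tilde j)$, respecting the two projections to $\widetilde{\bang S}$, and satisfying the two-sided prefix lifting property. Second, given such an $R$, define a relation $R'$ on pointed cube paths by $(\rho,\sigma)\in R'$ iff $([\rho],[\sigma])\in R$ (using Lemma~\ref{lem:cube_path_rep} to read $[\rho]\in\tilde X$, $[\sigma]\in\tilde Y$); check that $(( i),( j))\in R'$, that $\lambda(\rho)=\mu(\sigma)$ follows from $R$ respecting the torus projections, that the $\sim$-closure clauses are automatic (homotopic paths have the same class), and that the $\sqsupseteq$-clauses translate via Lemma~\ref{lem:reach-vs-prefix} into the prefix lifting property of $R$. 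Third, conversely, given a labeled hp-bisimulation $R'$, define $R=\{([\rho],[\sigma])\mid(\rho,\sigma)\in R'\}$; verify it is well-defined (independent of representatives, using the $\sim$-clauses), that it is $\delta_k^\nu$-closed — this uses Lemma~\ref{lem:hom_fullcube} exactly as in the unlabeled proof, to see that passing to a lower face of $x_m$ corresponds to a specific homotopy class of prefixes that $R'$ must relate — that it contains $(\tilde i,\tilde j)$, that it respects the torus projections (from $\lambda(\rho)=\mu(\sigma)$ plus equal lengths, which hold because $\rho$ and $\sigma$ correspond under the $\sqsupseteq$-clauses step by step from $(( i),( j))$), and that it has the prefix lifting property.

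The step I expect to be the main obstacle is verifying that $R$ (built from $R'$) is genuinely a \emph{precubical} subset, i.e.\ closed under \emph{lower} faces $\tilde\delta_k^0$ in the unfolding. By Definition~\ref{defi:unfold}, $\tilde\delta_k^0[x_1,\dots,x_m]$ is the class of prefixes $\sigma$ with $\sigma*x_m\sim(x_1,\dots,x_m)$, so closure under $\tilde\delta_k^0$ amounts to: whenever $R'$ relates $\rho$ to $\sigma$, it also relates a suitable shorter path $\rho''$ (a representative of the lower face class) to a suitable shorter path $\sigma''$ with matching labels and the right homotopy class. This does not follow from the prefix clauses of hp-bisimilarity directly — those go in the extending direction — so one must instead start from $(( i),( j))$ and build up \emph{both} $\rho$ and its face-representative $\rho''$ as prefixes, tracking the $\sigma$-side in parallel; here the argument for the unlabeled Theorem~\ref{thm:hombis=histpres} should apply mutatis mutandis, and the only genuinely new bookkeeping is that at each stage the labels agree, which is guaranteed because $\lambda$ and $\mu$ commute with face maps and concatenation. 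I would also need to double-check the labeled analogue of Lemma~\ref{lem:openh=open}, that openness of $(f,\id)$ in $\LHDAh$ is equivalent to openness of $(\tilde f,\tilde{\id})=(\tilde f,\id)$ in $\LHDA$, but this is immediate from the unlabeled Lemma~\ref{lem:openh=open} since the label component plays no role in the lifting diagrams. Once these pieces are assembled, the theorem follows; as the paper notes, the overall structure is identical to the unlabeled case, with the label clause sliding in as an orthogonal constraint preserved by every construction.
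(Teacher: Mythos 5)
Your overall strategy is the same as the paper's: redo the proof of Theorem~\ref{thm:hombis=histpres}, carrying the labelings along, with the ``if'' direction using the clause $\lambda(\rho)=\mu(\sigma)$ to make the induced relation on unfoldings compatible with the (unfolded) labelings into $\bang S$. That half of your argument matches the paper.

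The gap is in the other direction, where you pass from a precubical relation $R\subseteq\tilde X\times\tilde Y$ respecting the torus projections to a relation $R'$ on pointed cube paths by setting $(\rho,\sigma)\in R'$ iff $([\rho],[\sigma])\in R$, and claim that ``$\lambda(\rho)=\mu(\sigma)$ follows from $R$ respecting the torus projections.'' It does not. Compatibility with the projections to $\widetilde{\bang S}$ only says $\tilde\lambda[\rho]=\tilde\mu[\sigma]$, i.e.\ that the split traces $\lambda(\rho)$ and $\mu(\sigma)$ are \emph{homotopic}; by Lemma~\ref{lem:cube_path_hom_bang} and its corollary this just means equal length and equal final cube of $\bang S$, not equality of the label sequences, which is what the labeled hp-bisimulation demands. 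Concretely, take $X=Y$ the filled square labeled $ab$ and $R$ the diagonal of $\tilde X\times\tilde X$ (coming from the identity span, all label components the identity): the two interleavings ``$a$ then $b$'' and ``$b$ then $a$'' are homotopic cube paths, so their classes are related by $R$, yet their label traces $(*,a,*,b,*)$ and $(*,b,*,a,*)$ differ; your $R'$ would therefore violate the clause $\lambda(\rho)=\mu(\sigma)$. This is exactly the point the paper flags: one has to define $R'=\{(\rho,\sigma)\mid([\rho],[\sigma])\in R,\ \lambda(\rho)=\mu(\sigma)\}$ \emph{instead}, restricting to pairs of representatives whose label traces literally agree, and then re-verify the $\sim$- and $\sqsupseteq$-clauses for this smaller relation (in particular, when extending or re-representing on one side you must now choose the matching representative on the other side so that the labels stay equal, which no longer comes for free from $R$ alone). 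Your remaining concerns (precubical closure of the relation built from an hp-bisimulation, and the labeled analogue of Lemma~\ref{lem:openh=open}) are handled as in the unlabeled case and are not where the labeled theorem differs.
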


\section{Conclusion}

We have shown that hp-bisimilarity for HDA can be characterized by spans
of open maps in the category of pointed precubical sets, or equivalently
by a zig-zag relation between cubes in all dimensions.  Aside from
implying decidability of hp-bisimilarity for HDA, and together with the
results of~\cite{DBLP:journals/tcs/Glabbeek06}, this confirms that HDA
is a natural formalism for concurrency: not only does it generalize the
main models for concurrency which people have been working with, but it
also is remarkably simple and natural.

One major question which remains is whether also \emph{hereditary}
hp-bisimilarity can fit into our framework.  Because of its
back-tracking nature, it seems that simple unfoldings of HDA are not the
right tools to use; one should rather consider some form of
back-unfoldings of forward-unfoldings.  Given the undecidability result
of~\cite{DBLP:journals/iandc/JurdzinskiNS03}, it seems doubtful,
however, that any characterization as simple as the one we have for
hp-bisimilarity can be obtained.

Another important question is how HDA relate to other models for
concurrency which are not present in the spectrum presented
in~\cite{DBLP:journals/tcs/Glabbeek06}.  One major such formalism is the
one of \emph{history-dependent automata} which have been introduced by
Montanari and Pistore
in~\cite{DBLP:journals/entcs/MontanariP97,DBLP:conf/stacs/MontanariP97}
and have recently attracted attention in model
learning~\cite{DBLP:conf/concur/AartsHV12,DBLP:conf/pts/AartsJU10}.  We
conjecture that up to hp-bisimilarity, HDA are equivalent to
history-dependent automata.

With regard to the geometric interpretation of HDA as directed
topological spaces, there are two open questions related to the work
laid out in the paper: In~\cite{Fahrenberg05-hda} we show that morphisms
in $\HDA$ are open if and only if their geometric realizations lift
pointed directed paths.  This shows that there are some connections to
weak factorization systems~\cite{AdamekHRT02-weak} here which should be
explored; see~\cite{KurzR05-weak} for a related approach.

In~\cite{Fahrenberg05-thesis} we relate homotopy of cube paths to
directed homotopy of directed paths in the geometric realization.  Based
on this, one should be able to prove that the geometric realization of
the unfolding of a higher-dimensional automaton is the same as the
universal directed covering~\cite{FajstrupR08-convenient} of its
geometric realization and hence that morphisms in $\HDAh$ are open if
and only if their geometric realizations lift dihomotopy classes of
pointed dipaths.

The precise relation of our HDA unfolding to the one for Petri
nets~\cite{DBLP:journals/tcs/NielsenPW81,DBLP:conf/fsttcs/HaymanW08} and
other models for concurrency should also be worked out.  A starting
point for this research could be the work on symmetric event structures
and their relation to presheaf categories
in~\cite{DBLP:conf/lics/StatonW10}.

\renewcommand{\thethm}{A.\arabic{thm}}

\clearpage
\section*{Appendix: Proofs}

\begin{proof*}{Proof of Lemma~\ref{lem:open}.}
  For the implication
  \eqref{enu:open.box}~$\Longrightarrow$~\eqref{enu:open.onestep}, let
  $p: P\to X$ be a pointed cube path with $P$ represented by $(
  p_1,\dots, p_m)$ and $p( p_m)= x_1$.  Let $p_{ m+ 1}$ be a cube of
  dimension one higher than $p_m$, set $p_m= \delta_k^0 p_{ m+ 1}$, and
  let $Q$ be the higher-dimensional path represented by $( p_1,\dots,
  p_m, p_{ m+ 1})$.  Let $g: P\to Q$ be the inclusion, and define $q:
  Q\to Y$ by $q( p_j)= f( p( p_j))$ for $j= 1,\dots, m$ and $q( p_{ m+
    1})= y_2$.  We have a lift $r: Q\to X$ and can set $x_2= r( p_{ m+
    1})$.

  The implication
  \eqref{enu:open.onestep}~$\Longrightarrow$~\eqref{enu:open.cubepath}
  can easily be shown by induction.  The case $y_m= \delta_k^0 y_{ m+
    1}$ follows directly from~\eqref{enu:open.onestep}, and the case
  $y_{ m+ 1}= \delta_k^1 y_m$ is clear by $\delta_k^1 \circ f= f\circ
  \delta_k^1$.

  To finish the proof, we show the implication
  \eqref{enu:open.cubepath}~$\Longrightarrow$~\eqref{enu:open.box}.  Let
  \begin{equation*}
    \xymatrix{%
      P \ar[r]^p \ar[d]_g & X \ar[d]^f \\ Q \ar[r]_q & Y
    }
  \end{equation*}
  be a commutative diagram, with $P$ represented by $( p_1,\dots, p_m)$.
  Up to isomorphism we can assume that $Q$ is represented by $(
  p_1,\dots, p_m, p_{ m+ 1},\dots, p_t)$ and that $g$ is the inclusion.
  The cube $p( p_m)$ is reachable in $X$, and $( q( p_m),\dots, q(
  p_t))$ is a cube path in $Y$ which starts in $q( p_m)= f( p( p_m))$.
  Hence we have a cube path $( x_m,\dots, x_t)$ in $X$ with $x_m= p(
  p_m)$ and $q( p_j)= f( x_j)$ for all $j= m,\dots, t$, and we can
  define a lift $r: Q\to X$ by $r( p_j)= p( p_j)$ for $j= 1,\dots, m$
  and $r( p_j)= x_j$ for $j= m+ 1,\dots, t$.
\end{proof*}

\begin{proof*}{Proof of Theorem~\ref{thm:bisim}.}
  For the implication
  \eqref{enu:bisim.box}~$\Longrightarrow$~\eqref{enu:bisim.onestep}, let
  $X\tfrom f Z\tto g Y$ be a span of open maps and define $R=\{( x,
  y)\in X\times Y\mid \exists z\in Z: x= f( z), y= g( z)\}$.  Then $( i,
  j)\in R$ because $f$ and $g$ are pointed morphisms, and the other
  properties follow by Lemma~\ref{lem:open}.  The implication
  \eqref{enu:bisim.onestep}~$\Longrightarrow$~\eqref{enu:bisim.cubepath}
  can be shown by a simple induction, and for the implication
  \eqref{enu:bisim.cubepath}~$\Longrightarrow$~\eqref{enu:bisim.box},
  the projections give a span $X\tfrom{ \pi_1} R\tto{ \pi_2} Y$ and are
  open by Lemma~\ref{lem:open}.
\end{proof*}

\begin{proof*}{Proof of Lemma~\ref{lem:hom_fullcube}
    ({{\cf~\cite[Ex.~2.15]{Fajstrup05-cubcomp}}}).}
  We can represent a cube path $( x_1,\dots, x_n, x)$ as above by an
  element $( p_1,\dots, p_n)$ of the symmetric group $S_n$ by setting
  $p_n= k_n$ and, working backwards, $p_j=(\{ 1,\dots, n\}\setminus\{
  p_{ j+ 1},\dots, p_n\})[ k_j]$, denoting by this the $k_j$-largest
  element of the set in parentheses.  This introduces a bijection
  between the set of cube paths from the lower left corner of $x$ to $x$
  on the one hand, and elements of $S_n$ on the other hand, and under
  this bijection adjacencies of cube paths are transpositions in $S_n$.
  These generate all of $S_n$, hence all such cube paths are
  homotopic.
\end{proof*}

\begin{proof*}{Proof of Proposition~\ref{thm:univ_cov}.}
  Before proving the proposition, we need an auxiliary notion of
  \emph{fan-shaped} cube path together with a technical lemma.  Say that
  a cube path $( x_1,\dots, x_m)$ in a precubical set $X$, with $x_m\in
  X_n$, is fan-shaped if
  \begin{equation*}
    x_j\in
    \begin{cases}
      X_0 &\text{for } 1\le j\le m- n\text{ odd,} \\
      X_1 &\text{for } 1\le j\le m- n\text{ even,} \\
      X_{ n+ j- m} &\text{for } m- n< j\le m\,.
    \end{cases}
  \end{equation*}
  Hence a fan-shaped cube path is a one-dimensional path up to the point
  where it needs to build up to hit the possibly high-dimensional end
  cube $x_m$.

  \begin{lemma}
    \label{lem:fan}
    Any pointed cube path in a higher-dimensional automaton $i: *\to X$
    is homotopic to a fan-shaped one.
  \end{lemma}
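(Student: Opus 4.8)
The plan is an induction on the length $m$ of the pointed cube path $\rho=(x_1,\dots,x_m)$, with end cube $x_m\in X_n$; the base case $m=1$ is immediate, as the trivial path $(i)$ is fan-shaped. For the step I would first apply the induction hypothesis to the prefix $(x_1,\dots,x_{m-1})$ and then re-append $x_m$. Since homotopies of cube paths fix both endpoints, and since replacing a subpath by a homotopic one (with the same endpoints) yields a homotopic path, this reduces us to the case where $(x_1,\dots,x_{m-1})$ is already fan-shaped, with $x_{m-1}\in X_{n'}$, and only the last step $x_{m-1}\to x_m$ needs to be analysed.

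If that step is a lower-face step, $x_{m-1}=\delta_k^0 x_m$ with $x_m\in X_{n'+1}$, then checking the definition shows that appending such a step to a fan-shaped path again gives a fan-shaped path: the constraints on $x_1,\dots,x_{m-1}$ are literally unchanged and the single new constraint is $x_m\in X_{n'+1}$, which holds; hence $\rho$ is already fan-shaped. The real work is the upper-face case $x_m=\delta_k^1 x_{m-1}\in X_{n'-1}$. If $n'=1$ the whole of $\rho$ is one-dimensional and therefore fan-shaped, so assume $n'\ge 2$. The ``fan part'' of $(x_1,\dots,x_{m-1})$ is a cube path ascending, one lower face at a time, from the corner $\delta_1^0\cdots\delta_1^0 x_{m-1}$ up to $x_{m-1}\in X_{n'}$ (that this corner is the starting $0$-cube follows from the precubical identity); Lemma~\ref{lem:hom_fullcube} says any two cube paths of this shape are homotopic with the same endpoints, so --- using $n'\ge 2$ --- we may arrange that the last lower face taken is $\delta_\ell^0 x_{m-1}$ for some $\ell\ne k$, without disturbing fan-shapedness of $(x_1,\dots,x_{m-1})$.

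Now, with $\ell\ne k$, the three-cube subpath $(x_{m-2},x_{m-1},x_m)$, consisting of a lower-face step via $\ell$ followed by an upper-face step via $k$, fits one of the two dimension-drop-by-two adjacency patterns (the third pattern if $\ell<k$, the fourth if $k<\ell$). Performing that adjacency replaces it by $(x_{m-2},x'_{m-1},x_m)$ with $x'_{m-1}\in X_{n'-2}$, in which now $x_{m-2}\to x'_{m-1}$ is an upper-face step and $x'_{m-1}\to x_m$ a lower-face step; thus $\rho\sim(x_1,\dots,x_{m-2},x'_{m-1},x_m)$. Its prefix $(x_1,\dots,x_{m-2},x'_{m-1})$ is a pointed cube path of length $m-1$, so by the induction hypothesis it is homotopic to a fan-shaped path, and re-appending the lower-face step to $x_m$ (again fan-shapedness-preserving, as in the lower-face case above) exhibits $\rho$ as homotopic to a fan-shaped cube path. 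This closes the induction.

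The delicate points are all in the upper-face case: one must keep careful track of dimensions and of the parity built into the definition of ``fan-shaped'' so that each appended step really lands in the prescribed $X_j$, and one must check that the reordering of the fan afforded by Lemma~\ref{lem:hom_fullcube} genuinely lets the final lower-face index avoid $k$ --- which is exactly where the hypothesis $n'\ge 2$ enters (when $n'=1$ there is no room to reorder, but then there is also nothing to do). The ancillary facts that a homotopy of a subpath induces a homotopy of the whole path, and that appending a fixed step to homotopic paths keeps them homotopic, are immediate from the definition of adjacency, and I would simply invoke them.
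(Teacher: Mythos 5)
Your proof is correct, but it takes a genuinely different route from the paper's. You induct on the length of the path: normalize the prefix to fan shape, observe that appending a lower-face step preserves fan-shapedness, and in the only nontrivial case (a terminal upper-face step $x_m=\delta_k^1 x_{m-1}$ with $\dim x_{m-1}\ge 2$) you invoke Lemma~\ref{lem:hom_fullcube} to reorder the terminal ascending fan so that the last lower face taken has index $\ell\ne k$, after which a single dimension-dropping adjacency (your third/fourth pattern) lets you recurse on a strictly shorter path. The paper instead argues by descent on the potential $T(x_1,\dots,x_m)=n_1+\cdots+n_m$: it shows $T\ge\tfrac12(n_m^2+m-1)$, with equality exactly for fan-shaped paths (plus a parity congruence), and for a non-fan-shaped path it locates the least ``peak'' index $\ell$ (with $x_{\ell-1}=\delta_{k_2}^0x_\ell$, $x_{\ell+1}=\delta_{k_3}^1x_\ell$, $n_\ell\ge2$) and produces an adjacent path with $T$ lowered by $2$ via the precubical identity, using a three-way case analysis on $k_2$ versus $k_3$, the clash $k_2=k_3$ being resolved by first replacing $x_{\ell-1}$ using the preceding index $k_1$. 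Your appeal to Lemma~\ref{lem:hom_fullcube} plays exactly the role of that $k_2=k_3$ sub-case: both devices arrange that the incoming lower face and the outgoing upper face use distinct indices so the precubical identity applies. What each approach buys: the paper's argument is self-contained (it never uses Lemma~\ref{lem:hom_fullcube} here) and the potential gives an explicit termination measure; yours is more structural but leans on the two auxiliary facts you rightly flag --- that replacing a subpath by a homotopic one with the same endpoints yields a homotopic whole path, and that a fan-shaped prefix ends in an all-lower-face chain from the (unique) bottom corner, so Lemma~\ref{lem:hom_fullcube} is applicable --- both of which do check out against the definitions.
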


  \begin{proof}
    Let us first introduce some notation: For any pointed cube path $(
    x_1,\dots, x_m)$, let $n_1,\dots, n_m\in \Nat$ be such that $x_j\in
    X_{ n_j}$ (hence $n_j$ is the \emph{dimension} of $x_j$), and let
    $T( x_1,\dots, x_m)= n_1+\cdots+ n_m$.  An easy induction shows that
    $j- n_j$ is odd for all $j$.  Also, $T( x_1,\dots, x_m)\ge \frac12(
    n_m^2+ m- 1)$, with equality if and only if $( x_1,\dots, x_m)$ is
    fan-shaped.

    Next we show that $n_1+\cdots+ n_m\equiv \frac12( n_m^2+ m- 1)\mod
    2$.  By oddity of $j- n_j$ we have $\sum_{ j= 1}^m n_j- \sum_{ j=
      1}^m j\equiv m\mod 2$, and also $\frac12( n_m^2+ m- 1)- \sum_{ j=
      1}^m j= \frac12( n_m^2- m^2- 1)\equiv m\mod 2$, hence the claim
    follows.

    We can now finish the proof by showing how to convert a cube path $(
    x_1,\dots, x_m)$ with $T( x_1,\dots, x_m)> \frac12( n_m^2+ m- 1)$
    into an adjacent cube path $( x_1',\dots, x_m')$ which has $T(
    x_1',\dots, x_m')= T( x_1,\dots, x_m)- 2$, essentially by replacing
    one of its cubes, called $x_\ell$ below, with another one of
    dimension $n_\ell- 2$.

    If $( x_1,\dots, x_m)$ is a cube path which is not fan-shaped, then
    there is an index $\ell\in\{3,\dots, m- 1\}$ for which $n_\ell\ge
    2$, $x_{ \ell- 1}= \delta_{ k_2}^0 x_\ell$ for some $k_2$, and $x_{
      \ell+ 1}= \delta_{ k_3}^1 x_\ell$ for some $k_3$.  Assuming $\ell$
    to be the \emph{least} such index, we must also have $x_{ \ell- 2}=
    \delta_{ k_1}^0 x_{ \ell- 1}$ for some $k_1$.

    Now if $k_2< k_3$, then $\delta_{ k_2}^0 x_{ \ell+ 1}= \delta_{
      k_2}^0 \delta_{ k_3}^1 x_\ell= \delta_{ k_3- 1}^1 \delta_{ k_2}^0
    x_\ell= \delta_{ k_3- 1}^1 x_{ \ell- 1}$ by the precubical
    identity~\eqref{eq:pcub}, hence we can let $( x_1',\dots, x_m')$ be
    the cube path with $x_j'= x_j$ for $j\ne \ell$ and $x_\ell'=
    \delta_{ k_2}^0 x_{ \ell+ 1}$.

    If $k_2> k_3$, then similarly $\delta_{ k_3}^1 x_{ \ell- 1}=
    \delta_{ k_3}^1 \delta_{ k_2}^0 x_\ell= \delta_{ k_2- 1}^0 \delta_{
      k_3}^1 x_\ell= \delta_{ k_2- 1}^0 x_{ \ell+ 1}$, and we can let
    $x_j'= x_j$ for $j\ne \ell$ and $x_\ell'= \delta_{ k_3}^1 x_{ \ell-
      1}$.

    For the remaining case $k_2= k_3$, we replace $x_{ \ell- 1}$ by
    another cube of equal dimension first: If $k_1< k_2$, then $x_{
      \ell- 2}= \delta_{ k_1}^0 \delta_{ k_2}^0 x_\ell= \delta_{ k_2-
      1}^0 \delta_{ k_1}^0 x_\ell$, hence the cube path $( x_1'',\dots,
    x_m'')$ with $x_j''= x_j$ for $j\ne \ell- 1$ and $x_{ \ell- 1}''=
    \delta_{ k_1}^0 x_\ell$ is adjacent to $( x_1,\dots, x_m)$, and $T(
    x_1'',\dots, x_m'')= T( x_1,\dots, x_m)$.  For this new cube path,
    we have $x_{ \ell- 2}''= \delta_{ k_2- 1}^0 x_{ \ell- 1}''$, $x_{
      \ell- 1}''= \delta_{ k_1}^0 x_\ell''$, and $x_{ \ell+ 1}''=
    \delta_{ k_3}^1 x_\ell''$, and as $k_1< k_3$, we can apply to the
    cube path $( x_1'',\dots, x_m'')$ the argument for the case $k_2<
    k_3$ above.

    If $k_1\ge k_2$, then $x_{ \ell- 2}= \delta_{ k_1}^0 \delta_{ k_2}^0
    x_\ell= \delta_{ k_2}^0 \delta_{ k_1+ 1}^0 x_\ell$ by another
    application of the precubical identity~\eqref{eq:pcub}.  Hence we
    can let $x_j''=x_j$ for $j\ne \ell- 1$ and $x_{ \ell- 1}''= \delta_{
      k_1+ 1}^0 x_\ell$.  Then $x_{ \ell- 2}''= \delta_{ k_2}^0 x_{
      \ell- 1}''$, $x_{ \ell- 1}''= \delta_{ k_1+ 1}^0 x_\ell''$, and
    $x_{ \ell+ 1}''= \delta_{ k_3}^1 x_\ell''$, and as $k_1+ 1> k_3$, we
    can apply the argument for the case $k_2> k_3$ above. 
  \end{proof}

  Now for the proof of Proposition~\ref{thm:univ_cov}, it is clear that
  the structure maps $\tilde \delta_k^1$ are well-defined.  For showing
  that also the mappings $\tilde \delta_k^0$ are well-defined, we note
  first that $\tilde \delta_k^0[ x_1,\dots,x_m]$ is independent of the
  representative chosen for $[ x_1,\dots, x_m]$: If $( x_1',\dots,
  x_m')\sim( x_1,\dots, x_m)$, then $( y_1,\dots, y_p)\in \tilde
  \delta_k^0[ x_1',\dots, x_m']$ if and only if $y_p= \delta_k^0 x_m'=
  \delta_k^0 x_m$ and $( y_1,\dots, y_p, x_m')=( y_1,\dots, y_p,
  x_m)\sim( x_1',\dots, x_m')\sim( x_1,\dots, x_m)$, if and only if $(
  y_1,\dots, y_p)\in \tilde \delta_k^0[ x_1,\dots, x_m]$.

  We are left with showing that $\tilde \delta_k^0[ x_1,\dots, x_m]$ is
  non-empty.  By Lemma~\ref{lem:fan} there is a fan-shaped cube path $(
  x_1',\dots, x_m')\in[ x_1,\dots, x_m]$, and by
  Lemma~\ref{lem:hom_fullcube} we can assume that $x_{ m- 1}'=
  \delta_k^0 x_m'= \delta_k^0 x_m$, hence $( x_1',\dots, x_{ m- 1}')\in
  \tilde \delta_k^0[ x_1,\dots, x_m]$.



  We need to show the precubical identity $\tilde \delta_k^\nu \tilde
  \delta_\ell^\mu= \tilde \delta_{ \ell- 1}^\mu \tilde \delta_k^\nu$ for
  $k< \ell$ and $\nu, \mu\in\{ 0, 1\}$.  For $\nu= \mu= 1$ this is
  clear, and for $\nu= \mu= 0$ one sees that $( y_1,\dots, y_p)\in
  \tilde \delta_k^0 \tilde \delta_\ell^0[ x_1,\dots, x_m]$ if and only
  if $y_p= \delta_k^0 \delta_\ell^0 x_m= \delta_{ \ell- 1}^0 \delta_k^0
  x_m$ and $( x_1,\dots, x_m)\sim ( y_1,\dots, y_p, \delta_\ell^0 x_m,
  x_m)\sim( y_1,\dots, y_p, \delta_k^0 x_m, x_m)$, by adjacency.

  The cases $\nu= 1$, $\mu= 0$ and $\nu= 0$, $\mu= 1$ are similar to
  each other, so we only show the former.  Let $( x_1',\dots, x_m')\in[
  x_1,\dots, x_m]$ be a fan-shaped cube path with $x_{ m- 1}'=
  \delta_\ell^0 x_m'$, \cf~Lemma~\ref{lem:hom_fullcube}.  Then $\tilde
  \delta_k^1 \tilde \delta_\ell^0[ x_1,\dots, x_m]= \tilde \delta_k^1[
  x_1',\dots, x_{ m- 1}']=[ x_1',\dots, x_{ m- 1}', \delta_k^1 x_{ m-
    1}']$.  Now $\delta_k^1 x_{ m- 1}'= \delta_k^1 \delta_\ell^0 x_m'=
  \delta_{ \ell- 1}^0 \delta_k^1 x_m$, and by adjacency, $( x_1',\dots,
  x_{ m- 1}', \delta_k^1 x_{ m- 1}', \delta_k^1 x_m')\sim( x_1',\dots,
  x_{ m- 1}', x_m', \delta_k^1 x_m')$, so that we have $( x_1',\dots,
  x_{ m- 1}', \delta_k^1 x_{ m- 1}')\in \tilde \delta_{ \ell- 1}^0[
  x_1',\dots, x_m', \delta_k^1 x_m']= \tilde \delta_{ \ell- 1}^0 \tilde
  \delta_k^1[ x_1',\dots, x_m']$.

  For showing that the projection $\pi_X: \tilde X\to X$ is a precubical
  morphism, we note first that $\pi_X \tilde \delta_k^1[ x_1,\dots,
  x_m]= \pi_X[ x_1,\dots, x_m, \delta_k^1 x_m]= \delta_k^1 x_m=
  \delta_k^1 \pi_X[ x_1,\dots, x_m]$ as required.  For $\tilde
  \delta_k^0$, let again $( x_1',\dots, x_m')\in[ x_1,\dots, x_m]$ be a
  fan-shaped cube path with $x_{ m- 1}'= \delta_k^0 x_m'$.  Then $\pi_X
  \tilde \delta_k^0[ x_1,\dots, x_m]= \pi_X[ x_1',\dots, x_{ m- 1}']=
  x_{ m- 1}'= \delta_k^0 x_m'= \delta_k^0 x_m= \delta_k^0 \pi_X[
  x_1,\dots, x_m]$.

  The proof that $*\to \tilde X$ is a higher-dimensional tree follows
  from Lemma~\ref{lem:cube_path_rep}: Let $( \tilde x_1,\dots, \tilde
  x_m)$, $( \tilde y_1,\dots, \tilde y_m)$ be pointed cube paths in
  $\tilde X$ with $\tilde x_m= \tilde y_m$, then we need to prove that
  $( \tilde x_1,\dots, \tilde x_m)\sim( \tilde y_1,\dots, \tilde y_m)$.
  Let $x_j= \pi_X \tilde x_j$, $y_j= \pi_X \tilde y_j$ for $j= 1,\dots,
  m$ be the projections, then $( x_1,\dots, x_m)$, $( y_1,\dots, y_m)$
  are pointed cube paths in $X$.  By Lemma~\ref{lem:cube_path_rep}, $(
  x_1,\dots, x_j)\in \tilde x_j$ and $( y_1,\dots, y_j)\in \tilde y_j$
  for all $j= 1,\dots, m$.

  By $\tilde x_m= \tilde y_m$, we know that $( x_1,\dots, x_m)\sim(
  y_1,\dots, y_m)$.  Let $( x_1,\dots, x_m)=( z^1_1,\dots,
  z^1_m)\sim\cdots\sim( z^p_1,\dots, z^p_m)=( y_1,\dots, y_m)$ be a
  sequence of adjacencies, and let $\tilde z^\ell_j=[ z^\ell_1,\dots,
  z^\ell_j]$.  This defines pointed cube paths $( \tilde z^\ell_1,\dots,
  \tilde z^\ell_m)$ in $\tilde X$; we show that $( \tilde x_1,\dots,
  \tilde x_m)=( \tilde z^1_1,\dots, \tilde z^1_m)\sim\cdots\sim( \tilde
  z^p_1,\dots, \tilde z^p_m)=( \tilde y_1,\dots, \tilde y_m)$ is a
  sequence of adjacencies:

  Let $\ell\in\{ 1,\dots,p- 1\}$, and let $\alpha\in\{ 1,\dots, m- 1\}$
  be the index such that $z^\ell_\alpha\ne z^{ \ell+ 1}_\alpha$ and
  $z^\ell_j= z^{ \ell+ 1}_j$ for all $j\ne \alpha$.  Then $(
  z^\ell_1,\dots, z^\ell_j)=( z^{ \ell+1}_1,\dots, z^{ \ell+ 1}_j)$ for
  $j< \alpha$ and $( z^\ell_1,\dots, z^\ell_j)\sim( z^{ \ell+1}_1,\dots,
  z^{ \ell+ 1}_j)$ for $j> \alpha$, hence there is an adjacency $(
  \tilde z^\ell_1,\dots, \tilde z^\ell_m)\sim( \tilde z^{ \ell+
    1}_1,\dots, \tilde z^{ \ell+ 1}_m)$.

  For the last claim of the proposition, if $X$ itself is a
  higher-dimensional tree, then an inverse to $\pi_X$ is given by
  mapping $x\in X$ to the unique equivalence class $[ x_1,\dots, x_m]\in
  \tilde X$ of any pointed cube path $( x_1,\dots, x_m)$ in $X$ with
  $x_m= x$.
\end{proof*}

\begin{proof*}{Proof of Lemma~\ref{lem:cube_path_rep}.}
  Let $x_j= \pi_X \tilde x_j$, for $j= 1,\dots, m$, then $( x_1,\dots,
  x_m)$ is a pointed cube path in $X$.  We show the claim by induction:
  We have $\tilde x_1= \tilde i=[ i]=[ x_1]$, so assume that $(
  x_1,\dots, x_j)\in \tilde x_j$ for some $j\in\{ 1,\dots, m- 1\}$.  If
  $\tilde x_{ j+ 1}= \tilde \delta_k^1 \tilde x_j$ for some $k$, then
  $x_{ j+ 1}= \delta_k^1 x_j$, and $( x_1,\dots, x_{ j+ 1})\in \tilde
  x_{ j+ 1}$ by definition of $\tilde \delta_k^1$.  Similarly, if
  $\tilde x_j= \tilde \delta_k^0 \tilde x_{ j+ 1}$ for some $k$, then
  $x_j= \delta_k^0 x_{j+ 1}$, and $( x_1,\dots, x_{ j+ 1})\in \tilde x_{
    j+ 1}$ by definition of $\tilde \delta_k^0$.
\end{proof*}

\begin{proof*}{Proof of Lemma~\ref{le:unf_unique_lift}.}
  Let $( \tilde x_1,\dots, \tilde x_m)$ be a pointed cube path in
  $\tilde X$, and write $x_j= \pi_X \tilde x_j$ for $j= 1,\dots, m$.
  Let $( x_1,\dots,x_m, y_1,\dots, y_p)$ be an extension in $X$ and
  define $\tilde y_j=[ x_1,\dots, x_m, y_1,\dots, y_j]$ for $j= 1,\dots,
  p$.  Then $( \tilde x_1,\dots, \tilde x_m, \tilde y_1,\dots, \tilde
  y_p)$ is the required extension in $\tilde X$, which is unique as
  $\tilde X$ is a higher-dimensional tree.
\end{proof*}

\begin{proof*}{Proof of Lemma~\ref{lem:hdt_vs_hdth}.}
  Using the projection isomorphisms, any morphism $f: X\to Y$ in $\HDTh$
  can be ``pulled down'' to a morphism $\pi_Y\circ f\circ \pi_X^{ -1}:
  X\to Y$ of $\HDT$.
\end{proof*}

\begin{proof*}{Proof of Lemma~\ref{lem:openh=open}.}
  For the forward implication of the first claim, let
  \begin{equation}
    \label{eq:small-diag}
    \vcenter{%
      \xymatrix{%
        P \ar[r]^p \ar[d]_g & \tilde X \ar[d]^f \\ Q \ar[r]_q & \tilde Y
      }}
  \end{equation}
  be a diagram in $\HDA$ with $g: P\to Q\in \HDP$; we need to find a
  lift $Q\to \tilde X$.

  Using the isomorphisms $\pi_P: \tilde P\to P$, $\pi_Q: \tilde Q\to Q$,
  we can extend this diagram to the left; note that $\tilde g: \tilde
  P\to \tilde Q$ is a morphism of $\HDP$:
  \begin{equation}
    \label{eq:wide_diag}
    \vcenter{%
      \xymatrix{%
        \tilde P \ar[r]_{ \cong} \ar[d]_{ \tilde g} \ar@/^3ex/[rr]^{ p'} & P
        \ar[r]_p \ar[d]_g & \tilde X \ar[d]^f \\ \tilde Q \ar[r]^{ \cong}
        \ar@/_3ex/[rr]_{ q'} & Q \ar[r]^q & \tilde Y
      }}
  \end{equation}

  Hence we have a diagram
  \begin{equation*}
    \xymatrix{%
      P \ar[r]^{ p'} \ar[d]_{ \tilde g} & X \ar[d]^f \\ Q \ar[r]_{ q'} &
      Y
    }
  \end{equation*}
  in $\HDAh$, and as $\tilde g: P\to Q$ is a morphism of $\HDPh$, we
  have a lift $r: Q\to X$ in $\HDAh$.  This gives a morphism $r: \tilde
  Q\to \tilde X\in \HDA$ in Diagram~\eqref{eq:wide_diag}, and by
  composition with the inverse of the isomorphism $\pi_Q: \tilde Q\to
  Q$, a lift $r': Q\to \tilde X\in \HDA$ in
  Diagram~\eqref{eq:small-diag}.

  For the back implication in the first claim, assume $f: \tilde X\to
  \tilde Y\in \HDA$ open and let
  \begin{equation*}
    \xymatrix{%
      P \ar[r]^p \ar[d]_g & X \ar[d]^f \\ Q \ar[r]_q & Y
    }
  \end{equation*}
  be a diagram in $\HDAh$ with $g: P\to Q\in \HDPh$; we need to find a
  lift $Q\to X$.  Transferring this diagram to the category $\HDA$, we
  have
  \begin{equation*}
    \xymatrix{%
      \tilde P \ar[r]^p \ar[d]_g & \tilde X \ar[d]^f \\ \tilde Q
      \ar[r]_q & \tilde Y
    }
  \end{equation*}
  and as $g: \tilde P\to \tilde Q$ is a morphism of $\HDP$, we get the
  required lift.

  To prove the second claim, let
  \begin{equation*}
    \xymatrix{%
      P \ar[r]^p \ar[d]_h & \tilde X \ar[d]^(.6){ \tilde g} \\ Q \ar[r]_q &
      \tilde Y
    }
  \end{equation*}
  be a diagram in $\HDA$ with $h: P\to Q\in \HDP$.  We can extend it
  using the projection morphisms:
  \begin{equation*}
    \xymatrix{%
      P \ar[r]^p \ar[d]_h & \tilde X \ar[r]^{ \pi_X} \ar[d]^(.6){ \tilde g} &
      X \ar[d]^g \\ Q \ar[r]_q & \tilde Y \ar[r]_{ \pi_Y} & Y
    }
  \end{equation*}
  Because $g$ is open in $\HDA$, we hence have a lift
  \begin{equation*}
    \xymatrix{%
      P \ar[r]^p \ar[d]_h & \tilde X \ar[r]^{ \pi_X} \ar[d]^(.6){ \tilde g} &
      X \ar[d]^g \\ Q \ar[r]_q \ar@{.>}[rru]_(.7)r & \tilde Y \ar[r]_{ \pi_Y}
      & Y
    }
  \end{equation*}
  and Lemma~\ref{le:unf_unique_lift} then gives the required lift $r'$
  in the diagram
  \begin{equation*}
    \xymatrix{%
      P \ar[r]^p \ar[d]_g & \tilde X \ar[d]^{ \pi_X} \\ Q \ar[r]_r
      \ar@{.>}[ur]^{r'} & X
    }
  \end{equation*}
\end{proof*}

\begin{proof*}{Proof of Lemma~\ref{lem:reach-vs-prefix}.}
  For the forward implication, let $( \tilde x, \tilde y_1,\dots, \tilde
  y_p)$ be a cube path in $\tilde X$ with $\tilde y_p= \tilde z$, let $(
  x_1,\dots, x_m)\in \tilde x$, and write $y_j= \pi_X \tilde y_j$ for
  all $j$.  By Lemma~\ref{lem:cube_path_rep}, $( x_1,\dots, x_m,
  y_1,\dots, y_p)\in \tilde z$.

  For the other direction, let $( x_1,\dots, x_m, y_1,\dots, y_p)\in
  \tilde z$ such that $( x_1,\dots, x_m)\in \tilde x$, and define
  $\tilde y_j=[ x_1,\dots, x_m, y_1,\dots, y_j]$ for all $j$.  Then $(
  \tilde x, \tilde y_1,\dots, \tilde y_p)$ is the required cube path
  from $\tilde x$ to $\tilde z$ in $\tilde X$.
\end{proof*}

\begin{proof*}{Proof of Proposition~\ref{thm:hombis}.}
  The implication
  \eqref{enu:hombis.box}~$\Longrightarrow$~\eqref{enu:hombis.onestep}
  follows directly from Theorem~\ref{thm:bisim},
  and~\eqref{enu:hombis.cubepath} can be proven
  from~\eqref{enu:hombis.onestep} by induction.  (We can omit the
  reachability condition from items~\eqref{enu:hombis.onestep}
  and~\eqref{enu:hombis.cubepath} because any cube in an unfolding is
  reachable.)  Equivalence of~\eqref{enu:hombis.cubepath}
  and~\eqref{enu:hombis.prefix} is immediate from
  Lemma~\ref{lem:reach-vs-prefix}.

  For the implication
  \eqref{enu:hombis.cubepath}~$\Longrightarrow$~\eqref{enu:hombis.box},
  we can use Theorem~\ref{thm:bisim} to get a span $\tilde X\tfrom f
  R\tto g \tilde Y$ of open maps in $\HDA$.  Connecting these with the
  projection $\pi_R: \tilde R\to R$ gives a span $\tilde X\tfrom{ f\circ
    \pi_R} \tilde R\tto{ g\circ \pi_R} \tilde Y$.  By
  Corollary~\ref{cor:proj_open}, the maps in the span are open in
  $\HDA$, hence by Lemma~\ref{lem:openh=open}, $X\tfrom{ f\circ \pi_R}
  R\tto{ g\circ \pi_R} Y$ is a span of open maps in $\HDAh$.
\end{proof*}

\begin{proof*}{Proof of Theorem~\ref{th:hbis=bis}.}
  A span of open maps $X\tfrom f Z\tto g Y$ in $\HDA$ lifts to a span
  $X\tfrom{ \tilde f} Z\tto{ \tilde g} Y$ in $\HDAh$, and $\tilde f$ and
  $\tilde g$ are open by Lemma~\ref{le:liftopen}.  Hence bisimilarity
  implies homotopy bisimilarity.

  For the other direction, let $X\tfrom f Z\tto g Y$ be a span of open
  maps in $\HDAh$.  In $\HDA$, this is a span $\tilde X\tfrom f \tilde
  Z\tto g \tilde Y$, and composing with the projections yields $X\tfrom{
    \pi_X\circ f} \tilde Z\tto{ \pi_Y\circ g} Y$.  By
  Lemma~\ref{lem:openh=open} and Corollary~\ref{cor:proj_open}, both
  $\pi_x\circ f$ and $\pi_Y\circ g$ are open in $\HDA$.
\end{proof*}

\begin{proof*}{Proof of Theorem~\ref{thm:hombis=histpres}.}
  For the ``if'' part of the theorem, assume that we have a relation $R$
  as in Definition~\ref{defi:histpres} and define $\tilde R\subseteq
  \tilde X\times \tilde Y$ by $\tilde R=\{( \tilde x, \tilde y)\mid
  \exists \rho\in \tilde x, \sigma\in \tilde y:( \rho, \sigma)\in R\}$.
  Then $( \tilde i, \tilde j)\in \tilde R$.  Now let $( \tilde x_1,
  \tilde y_1)\in \tilde R$ and $\tilde x_2\sqsupseteq \tilde x_1$.  We
  have $\rho_1\in \tilde x_1$ and $\sigma_1\in \tilde y_1$ for which $(
  \rho_1, \sigma_1)\in R$.  Let $\rho_1'\in \tilde x_1$ and $\rho_2\in
  \tilde x_2$ such that $\rho_2\sqsupseteq \rho_1'$, then $\rho_1'\sim
  \rho_1$, hence we have $\sigma_1'\sim \sigma_1$ for which $( \rho_1',
  \sigma_1')\in R$.  By $\rho_2\sqsupseteq \rho_1'$ we also have
  $\sigma_2\sqsupseteq \sigma_1'$ for which $( \rho_2, \sigma_2)\in R$,
  hence $( \tilde x_2=[ \rho_2],[ \sigma_2])\in \tilde R$ as was to be
  shown.  The symmetric condition in
  Theorem~\ref{thm:hombis}\eqref{enu:hombis.prefix} can be shown
  analogously, and one easily sees that $\tilde R$ is indeed a
  precubical set.

  For the other implication, let $\tilde R\subseteq \tilde X\times
  \tilde Y$ be a precubical set as in
  Theorem~\ref{thm:hombis}\eqref{enu:hombis.prefix} and define a
  relation of pointed cube paths by $R=\{( \rho, \sigma)\mid([ \rho],[
  \sigma])\in \tilde R\}$.  Then $(( i),( j))\in R$.  Now let $( \rho,
  \sigma)\in R$, then also $( \rho', \sigma')\in R$ for any $\rho'\sim
  \rho$, $\sigma'\sim \sigma$, showing the first two conditions of
  Definition~\ref{defi:histpres}.  For the third one, let
  $\rho'\sqsupseteq \rho$, then $[ \rho']\sqsupseteq[ \rho]$, hence we
  have $\tilde y_2\sqsupseteq[ \sigma]$ for which $([ \rho'], \tilde
  y_2)\in \tilde R$.  By definition of $R$ we have $( \rho', \sigma')\in
  R$ for any $\sigma'\in \tilde y_2$, and by $\tilde y_2\sqsupseteq[
  \sigma]$, there is $\sigma'\in \tilde y_2$ for which
  $\sigma'\sqsupseteq \sigma$, showing the third condition.  The fourth
  condition is proved analogously.
\end{proof*}

\begin{proof*}{Proof of Corollary~\ref{co:decidable}.}
  The condition in Thm.~\ref{thm:bisim}\eqref{enu:bisim.onestep}
  immediately gives rise to a fixed-point algorithm similar to the one
  used to decide standard bisimilarity, \cf~\cite{book/Milner89}.
\end{proof*}

\begin{proof*}{Proof of Theorem~\ref{th:hombis=bistpres-l}.}
  The proof is similar to the one of Theorem~\ref{thm:hombis=histpres}.
  For the ``if'' part, the condition $\lambda( \rho)= \mu( \sigma)$
  ensures that the homotopy bisimilarity relation respects homotopy
  classes of split traces, and for the ``only if'' part, starting with a
  homotopy bisimilarity relation $\tilde R\subseteq \tilde X\times
  \tilde Y$, we have to define the history-preserving bisimilarity
  relation $R$ by $R=\{( \rho, \sigma)\mid ([ \rho],[ \sigma])\in \tilde
  R, \lambda( \rho)= \mu( \sigma)\}$ instead.
\end{proof*}

\end{document}